\pdfoutput=1

\documentclass[11pt, a4paper]{article}
\usepackage[a4paper,margin=1in]{geometry}
\usepackage{amsmath,amssymb,amsfonts,mathtools,bm}
\usepackage{amsthm}
\usepackage{graphicx}
\usepackage{enumitem}
\usepackage{booktabs}
\usepackage{algorithm}
\usepackage{algpseudocode}
\usepackage[authoryear,round]{natbib}
\usepackage{hyperref}
\usepackage{microtype}
\hypersetup{colorlinks=true,linkcolor=blue,citecolor=blue,urlcolor=blue}

\newcommand{\E}{\mathbb{E}}
\newcommand{\Var}{\operatorname{Var}}
\newcommand{\Cov}{\operatorname{Cov}}
\newcommand{\NB}{\operatorname{NB}}
\newcommand{\Beta}{\operatorname{Beta}}
\newcommand{\Gam}{\operatorname{Gamma}}
\newcommand{\clip}{\operatorname{clip}}
\newcommand{\diag}{\operatorname{diag}}
\newcommand{\spr}{\rho_{\mathrm{sp}}}

\newcommand{\vecop}{\operatorname{vec}}
\newcommand{\Ne}{N_{\mathrm e}}
\newcommand{\R}{\mathbb{R}}

\theoremstyle{plain}
\newtheorem{proposition}{Proposition}
\theoremstyle{remark}

\title{Negative-binomial ensemble filtering for discrete-time Hawkes network reconstruction}
\author{Naratip Santitissadeekorn\\[0.4em]
{\small University of Surrey, Guildford, United Kingdom}\\
{\small\texttt{n.santitissadeekorn@surrey.ac.uk}}}
\date{}

\begin{document}
\maketitle

\begin{abstract}
We develop a sequential ensemble filter for reconstructing directed excitation networks from overdispersed multivariate counts observed on a regular time grid. The data-generating model is a discrete-time negative-binomial Hawkes process, for which we prove exact first- and second-moment stability conditions and derive a row-wise sufficient condition. The filter learns the network and the dispersion online, and a row-composition parameter forecast keeps enough ensemble variation to respond to structural change. In synthetic experiments, we demonstrate that a Poisson-based filter provides comparable point-estimate accuracy of the network, while the proposed filter provides substantially better calibrated predictive and parameter uncertainty. Under an abrupt redistribution of influence, the filter recovers connections that are absent before the change. An application to daily property-crime counts in sixteen Chicago community areas gives broadly calibrated one-step predictions and a recovered network consistent with independent temporal and geographic information.
\end{abstract}

\noindent\textbf{Keywords:} Hawkes process, negative binomial, ensemble filter, network reconstruction, overdispersed counts, sequential data assimilation

\section{Introduction}\label{sec:intro}

Multivariate event streams in which the timing and type of events are informative about future activity arise in financial markets \citep{hawkes2018review,bacry2015finance}, epidemiology \citep{meyer2012space}, seismology \citep{ogata1988earthquake}, crime \citep{mohler2011crime,santitissadeekorn2018selfexciting,santitissadeekorn2020poisson} and service networks \citep{daw2018queues,koops2018infinite}. Hawkes processes \citep{hawkes1971point,hawkes1971spectra,hawkes1974cluster} provide the standard framework. In the multivariate case the excitation coefficients form a weighted directed network: the diagonal entries describe self-excitation and the off-diagonal entries describe which component tends to excite subsequent activity in which other component and how strongly. It is also known that for linear Hawkes processes observed through event times the absence of an excitation kernel from \(j\) to \(i\) coincides with the absence of Granger causality \citep{eichler2017graphical,xu2016learning,achab2017uncovering}.

The present paper concerns a different observational regime of overdispersed counts on a regular grid rather than event times. One approach retains a latent continuous-time Hawkes process as the inferential target and treats the counts as aggregated observations of it. The INAR\((\infty)\) construction \citep{kirchner2016inar} converges weakly to a Hawkes process as the mesh shrinks, and the associated INAR\((p)\) estimator rescales conditional-least-squares coefficients to recover the baseline and kernel \citep{kirchner2017estimation}. At fixed bin width both discretisation and lag truncation introduce error. Binned likelihoods and INAR approximations can be biased and highly variable as the bin widens, and an EM scheme over latent within-bin event times has been proposed instead \citep{shlomovich2022binned}. A sequential Monte Carlo route to the same incomplete-data likelihood has also been developed \citep{chen2025discrete}. We take a different position. The discrete-time negative-binomial Hawkes recursion introduced in Section~\ref{sec:model} is treated as the data-generating model itself. No within-bin event times or continuous-time intensity path are posited or reconstructed. The inferential target is the excitation matrix of the discrete-time count model, and the method is designed to reconstruct that matrix sequentially and to adapt when it changes.

The closest predecessor to the proposed filter is the ensemble Poisson--gamma filter (EnPGF) for Hawkes network reconstruction from count data \citep{santitissadeekorn2023physd,santitissadeekorn2020poisson}, which belongs to the broader tradition of ensemble Kalman filtering \citep[EnKF,][]{evensen1994sequential} and its distribution-specific non-Gaussian variants such as the GIGG-EnKF \citep{bishop2016gigg}. Note that the EnPGF assumes conditionally Poisson counts, whereas observed counts often contain additional variation from unmodelled heterogeneity, local bursts, imperfect binning or unresolved within-step dynamics. A conditionally Poisson self-exciting process is already marginally overdispersed because its conditional mean is random. Thus, the relevant distinction is conditional variation. Under a correctly specified conditional mean, the Poisson score remains a valid estimating function and the Poisson quasi-likelihood estimator retains the same probability limit \citep{fokianos2009poisson,christou2014quasi,ahmad2016qmle,davis2016theory}, but its model-based precision is too high, since a Poisson likelihood attributes more information to each count than an overdispersed count contains. This is consequential in sequential estimation because ensemble spread controls the next analysis increment. Overstating the information in each count contracts the parameter ensemble too quickly and eventually weakens its response to new information or structural change. We therefore use a negative-binomial observation model to represent the information in the counts through the ensemble spread. Related negative-binomial mechanisms for self-exciting counts appear in the INGARCH literature \citep{ferland2006integer,zhu2011negative}, of which the present conditional-mean recursion is a multivariate linear instance, and in self-exciting negative-binomial distribution processes \citep{hisakado2022multiterm,hisakado2022multidimensional,sakuraba2024self}, which excite both mean and dispersion. Multivariate count autoregressions that contain this recursion as a special case have been studied with conditionally Poisson marginals and contemporaneous dependence between nodes \citep{fokianos2020multivariate} and with conditionally independent counts whose variance may be quadratic in the mean \citep{bracher2017periodic}. Section~\ref{sec:stability} compares their assumptions with those made here.

The paper is organised as follows. Section~\ref{sec:model} specifies the model, its parameter domain and its moment stability. Section~\ref{sec:filter} develops the filter and states the complete algorithm. Section~\ref{sec:experiments} reports twin experiments on static reconstruction and network-change tracking, together with a paired comparison against the ensemble Poisson--gamma filter of the predecessor work. Section~\ref{sec:chicago} applies the filter to daily property-crime counts in sixteen Chicago community areas over 2021--2024. The deterministic seasonal cycle is estimated offline, the filter is run on the anomaly and the dispersion is estimated node by node. Appendix~\ref{app:stability_proof} proves the stability proposition.

\section{Discrete-time negative-binomial Hawkes model}\label{sec:model}

\subsection{Model specification and parameters}\label{sec:model_spec}

Let \(Y_{k,i}\in\{0,1,2,\ldots\}\) be the number of events recorded at node \(i\in\{1,\ldots,m\}\) during time step \(k\) and let \(\Delta t>0\) be the observation interval. The discrete-time conditional rate \(\lambda_{k,i}\) and conditional mean count \(M_{k,i}=\lambda_{k,i}\Delta t\) evolve by the multivariate Hawkes recursion with exponential memory,
\begin{equation}
\lambda_{k+1,i}
=
\mu_i+
(1-\beta_i\Delta t)(\lambda_{k,i}-\mu_i)
+
\sum_{j=1}^{m}\alpha_{ij}Y_{k,j},
\qquad i=1,\ldots,m ,
\label{eq:model}
\end{equation}
where \(\mu_i\) is a baseline rate, \(\beta_i\) a decay rate and \(\alpha_{ij}\) the excitation coefficient from node \(j\) to node \(i\). The matrix \(A=(\alpha_{ij})\) is a weighted directed network. Since the observations are counts on a grid rather than event times, \(A\) is interpreted as a predictive excitation network. A nonzero \(\alpha_{ij}\) means that past counts at \(j\) contribute to the conditional rate of \(i\) within \eqref{eq:model}.

Conditional on \(M_{k,i}\) and a dispersion parameter \(r>0\), counts follow a negative-binomial law,
\begin{equation}
Y_{k,i}\mid M_{k,i},r\sim\NB\Big(r,\frac{r}{r+M_{k,i}}\Big),
\qquad
\E(Y\mid M,r)=M,\qquad \Var(Y\mid M,r)=M+\frac{M^2}{r},
\label{eq:nb}
\end{equation}
Large \(r\) approaches the Poisson limit, while small \(r\) permits strong conditional variability. Equivalently, \(Y_{k,i}\) is Poisson with random mean \(M_{k,i}\varepsilon_{k,i}\) for independent gamma variables \(\varepsilon_{k,i}\) with shape \(r\) and scale \(1/r\), which is how counts are simulated in Section~\ref{sec:experiments}. Conditional on the intensity vector the observation model factorises across nodes.

The recursion is a rate model only if its persistence coefficient is nonnegative, and hence we impose
\begin{equation}
\mu_i>0,\qquad 0<\beta_i\Delta t\leq1,\qquad \alpha_{ij}\geq0 .
\label{eq:domain}
\end{equation}
Under \eqref{eq:domain}, \eqref{eq:model} is a nonnegative combination of nonnegative terms and gives \(\lambda_{k+1,i}\geq\beta_i\Delta t\,\mu_i>0\) whenever \(\lambda_{k,i}\geq0\). The upper bound on \(\beta_i\Delta t\) is necessary. Without it the conditional mean can leave the positive half-line and the transform on which Section~\ref{sec:filter} is built is undefined. The inferential target is the parameter vector \(\theta=(\theta_1^\top,\ldots,\theta_m^\top)^\top\) with \(\theta_i=(\mu_i,\beta_i,\alpha_{i1},\ldots,\alpha_{im})^\top\), and in particular the excitation matrix \(A\), together with the common dispersion \(r\) when it is unknown.

Two properties of \eqref{eq:nb} are used later. The per-observation log-likelihood is
\begin{equation}
\begin{split}
\ell_{\mathrm{NB}}(M;y,r)
&=\log\Gamma(y+r)-\log\Gamma(r)-\log\Gamma(y+1)\\
&\qquad+r\log r+y\log M-(y+r)\log(r+M),
\end{split}
\label{eq:nb_loglik}
\end{equation}
and its score and Fisher information with respect to the mean are
\begin{equation}
\frac{\partial\ell_{\mathrm{NB}}}{\partial M}=\frac{r\,(y-M)}{M(r+M)}=\frac{r}{r+M}\cdot\frac{y-M}{M},
\qquad
\mathcal I_{\mathrm{NB}}(M)=\frac{r}{M(r+M)}=\frac{r}{r+M}\,\mathcal I_{\mathrm P}(M),
\label{eq:nb_score}
\end{equation}
where \((y-M)/M\) and \(\mathcal I_{\mathrm P}(M)=1/M\) are the Poisson score and information. A negative-binomial count therefore carries the fraction \(r/(r+M)\) of the information in a Poisson count with the same conditional mean.

\subsection{Moment stability of the generative model}\label{sec:stability}

This section discusses the stability conditions under which the first- and second-moment recursions of the count model \eqref{eq:model}--\eqref{eq:nb} converge. The conditions are used to select simulated truths with finite stationary variance and to characterise the parameter forecast of Section~\ref{sec:adaptability}.

In terms of the count-mean vector \(\bm M_k=(M_{k,1},\ldots,M_{k,m})^\top\), \eqref{eq:model} becomes
\begin{equation}
\bm M_{k+1}=\bm\omega+B\bm M_k+\mathcal A\,\bm Y_k,
\qquad
\omega_i=\mu_i\beta_i\Delta t^2,\quad B=\diag(1-\beta_i\Delta t),\quad \mathcal A=\Delta t\,A,
\label{eq:count_mean}
\end{equation}
and we write \(G=B+\mathcal A\), which is entrywise nonnegative under \eqref{eq:domain}.

At the level of the conditional mean, the recursion \eqref{eq:count_mean} is a time-homogeneous first-order member of two classes of multivariate linear count autoregressions \citep{fokianos2020multivariate,bracher2017periodic}. In both, the matrix multiplying the lagged conditional mean need not be diagonal, so the conditional mean of one node can pass directly into that of another. Here \(B\) is diagonal. Each node relaxes towards its own baseline and nodes interact only through the counts. The direct contribution of \(Y_{k,j}\) to \(M_{k+\ell+1,i}\), carried by the intensity of node \(i\), is \((1-\beta_i\Delta t)^{\ell}\Delta t\,\alpha_{ij}\). Therefore, it decays at the rate of the receiving node as in \eqref{eq:model}. The two classes differ in the conditional law of \(\bm Y_k\). In the first class \citep{fokianos2020multivariate}, each count is Poisson given the past, and the counts of different nodes in the same step are linked by an unspecified copula placed on the waiting times of the underlying Poisson processes. That construction also admits mixed-Poisson marginals, including the gamma-mixed Poisson form of \eqref{eq:nb}. Here, as in the second class \citep{bracher2017periodic}, the counts are conditionally independent given \(\bm M_k\), and each is negative binomial as in the endemic--epidemic models for multivariate infectious disease surveillance counts \citep{held2005statistical,meyer2014power}. The joint conditional law is then the product of the marginals in \eqref{eq:nb}.

These assumptions have different consequences for estimation and for the moments. The Poisson quasi-likelihood estimator for the first class \citep{fokianos2020multivariate} treats the nodes as independent within a step. Under the stability and regularity conditions given there, correct specification of the conditional mean is enough for it to be consistent and asymptotically normal, and the copula does not need to be estimated. The copula affects only its asymptotic covariance. Related quasi-likelihood results for Poisson network autoregressions with a known network are also available \citep{armillotta2024count}. The analysis update of Section~\ref{sec:analysis} instead takes the product of the negative-binomial factors as the joint observation likelihood since the counts are conditionally independent. Without conditional independence the same scalar updates remain usable under working independence, but the ensemble then represents the working model rather than the joint filtering distribution. For Poisson counts, the contraction condition under which a stationary solution exists also gives finite moments of all orders \citep[Proposition~3.2]{fokianos2020multivariate}. For negative-binomial counts the conditional variance is quadratic in the conditional mean, and with the count fed back into the rate, mean stability no longer implies finite variance.

\begin{proposition}[Moment stability]\label{prop:stability}
Consider \eqref{eq:count_mean} under \eqref{eq:nb}, assume \eqref{eq:domain} and \(\bm\omega>0\), and let \(E=[e_1\otimes e_1,\ldots,e_m\otimes e_m]\in\R^{m^2\times m}\), where \(e_i\) is the \(i\)th standard basis vector of \(\R^m\).
\begin{enumerate}[label=(\roman*),leftmargin=*]
\item The mean sequence \(\E[\bm M_k]\) converges from every finite initialisation if and only if \(\spr(G)<1\), with limit \((I-G)^{-1}\bm\omega\), and any strictly stationary solution with finite mean satisfies \(\spr(G)<1\). Since \(G\geq0\), the row-sum bound
\begin{equation}
\max_{i}\Big(1-\beta_i\Delta t+\Delta t\sum_{j=1}^{m}\alpha_{ij}\Big)<1
\label{eq:first_order}
\end{equation}
is sufficient.
\item The covariance sequence \(\Cov(\bm M_k)\) converges from every finite-covariance initialisation if and only if
\begin{equation}
\spr(L)<1,\qquad L=G\otimes G+\frac1r\,(\mathcal A\otimes\mathcal A)EE^\top ,
\label{eq:second_order_exact}
\end{equation}
diverging in Frobenius norm otherwise, and any strictly stationary solution with finite covariance satisfies \eqref{eq:second_order_exact}. A sufficient per-row test is
\begin{equation}
\max_{i}\left\{\Big(1-\beta_i\Delta t+\Delta t\sum_{j}\alpha_{ij}\Big)^{2}+\frac{\Delta t^{2}}{r}\Big(\sum_{j}\alpha_{ij}\Big)^{2}\right\}<1 ,
\label{eq:second_order_row}
\end{equation}
which is exact when \(m=1\), where \(\spr(L)=\gamma^2+a^2/r\) with \(\gamma=1-\beta\Delta t+\alpha\Delta t\) and \(a=\alpha\Delta t\).
\end{enumerate}
\end{proposition}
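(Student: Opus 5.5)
The plan is to pass from \eqref{eq:count_mean} to deterministic linear recursions for the first two moments, and then read off convergence from spectral radii of nonnegative matrices (Perron--Frobenius). Write \(\bm\xi_k=\bm Y_k-\bm M_k\). By \eqref{eq:nb} and conditional independence across nodes, \(\E[\bm\xi_k\mid\bm M_k]=0\) and \(\Cov(\bm\xi_k\mid\bm M_k)=\diag(\bm M_k+\bm M_k\circ\bm M_k/r)\). The recursion can then be rewritten as \(\bm M_{k+1}=\bm\omega+G\bm M_k+\mathcal A\bm\xi_k\), where \(\bm\xi_k\) is a martingale difference that is uncorrelated with \(\bm M_k\).

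\textbf{Part (i).} Taking expectations gives \(\bm m_{k+1}=\bm\omega+G\bm m_k\) for \(\bm m_k=\E[\bm M_k]\). Two initialisations differ after \(k\) steps by \(G^k(\bm m_0-\bm m_0')\). Hence convergence from every initialisation is equivalent to \(G^k\to0\), that is, to \(\spr(G)<1\), and the limit then solves \(\bm m=\bm\omega+G\bm m\).

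For a stationary solution with finite mean we have \(\bm m=\bm\omega+G\bm m\) with \(\bm m\ge0\). Since \(\bm\omega>0\), this gives \(\bm m>0\) and \(G\bm m\le\bm m-\bm\omega<\bm m\) componentwise. The subinvariance (Collatz--Wielandt) bound for \(G\ge0\) then yields \(\spr(G)\le\max_i (G\bm m)_i/m_i<1\). The row-sum bound \eqref{eq:first_order} follows from \(\spr(G)\le\|G\|_\infty\), since \(G\ge0\).

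\textbf{Part (ii).} Let \(\Sigma_k=\Cov(\bm M_k)\). The decomposition above gives
\[
\Sigma_{k+1}=G\Sigma_kG^\top+\mathcal A\,\diag\big(\bm m_k+\E[\bm M_k\circ\bm M_k]/r\big)\mathcal A^\top .
\]
We have \(\E[\bm M_k\circ\bm M_k]=E^\top\vecop(\Sigma_k+\bm m_k\bm m_k^\top)\) and \(\vecop(\diag v)=Ev\). Vectorising therefore gives
\[
\vecop\Sigma_{k+1}=L\,\vecop\Sigma_k+\bm c_k,\qquad \bm c_k=(\mathcal A\otimes\mathcal A)E\big(\bm m_k+E^\top\vecop(\bm m_k\bm m_k^\top)/r\big)\ge0 .
\]
Since \(L\ge G\otimes G\ge0\), we get \(\spr(G)^2\le\spr(L)\). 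So \(\spr(L)<1\) forces \(\spr(G)<1\), hence \(\bm c_k\) converges by (i), and the affine recursion converges. Conversely, differences of two initialisations evolve as \(L^k\), so convergence from every finite-covariance initialisation requires \(L^k\to0\).

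\textbf{Main obstacle: the Frobenius-norm divergence and the stationary necessity.} Here \(\vecop\Sigma_0\) must be a PSD matrix, and the forcing \(\bm c_k\) lives only in the range of \((\mathcal A\otimes\mathcal A)E\).
\begin{itemize}[leftmargin=*]
\item For the divergence claim, I would first try a left Perron vector \(w\ge0\) of \(L\). Because \(\vecop\Sigma_k\) need not be entrywise nonnegative, I would work with the diagonal or trace functionals and exploit that \(L\) maps PSD matrices to PSD matrices (it is \(\Sigma\mapsto G\Sigma G^\top+\tfrac1r\mathcal A\diag(\Sigma)\mathcal A^\top\)). This lets me invoke the Krein--Rutman/Perron theory for positive maps on the PSD cone and obtain a PSD eigenmatrix \(P\ne0\) with eigenvalue \(\spr(L)\).
\item If \(\spr(L)\ge1\), then \(\langle P^\ast,\Sigma_k\rangle\ge\spr(L)^k\langle P^\ast,\Sigma_0\rangle+\sum_j\langle P^\ast,c_j\rangle\) grows without bound for a dual eigenmatrix \(P^\ast\). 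The reducible case, where \(P^\ast\) is orthogonal to the forcing, has to be handled by choosing \(\Sigma_0\) with \(\langle P^\ast,\Sigma_0\rangle>0\).
\item For stationarity, the fixed point \(\Sigma=L(\Sigma)+C\), with \(C\succeq0\) and \(C\) nonzero whenever \(\mathcal A\ne0\), gives the same subinvariance argument on the cone.
\end{itemize}

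\textbf{Row test.} Put \(g_i=1-\beta_i\Delta t+\Delta t\sum_j\alpha_{ij}\) and \(a_i=\Delta t\sum_j\alpha_{ij}\). The row \((i,l)\) of \(L\) has sum \(g_ig_l+\tfrac1r\sum_j\mathcal A_{ij}\mathcal A_{lj}\le g_ig_l+a_ia_l/r\). By Cauchy--Schwarz this is at most \(\sqrt{(g_i^2+a_i^2/r)(g_l^2+a_l^2/r)}\), which is below \(1\) under \eqref{eq:second_order_row}. Hence \(\spr(L)\le\|L\|_\infty<1\). For \(m=1\), \(L\) is the scalar \(\gamma^2+a^2/r\), so the test is exact.
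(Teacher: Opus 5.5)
Your part (i), the sufficiency half of (ii) and the row test agree with the paper. The necessity half of (ii) --- divergence when \(\spr(L)\geq1\), and the stationary claim --- has a genuine gap.

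The differences argument only shows that \(L^k\) converges on differences of admissible covariances. That is compatible with \(\spr(L)=1\), and it says nothing about divergence from \emph{every} initialisation, which is what is claimed. The same slip occurs in (i): for example, \(m=1\) and \(\alpha=\beta\) give \(G=1\), so \(G^k\) converges while the mean diverges. There, however, your Collatz--Wielandt argument applied to any limit \(\bm m=\bm\omega+G\bm m\geq0\) closes it.

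In the cone sketch, you handle the case \(\langle P^\ast,C_k\rangle=0\) by choosing \(\Sigma_0\) with \(\langle P^\ast,\Sigma_0\rangle>0\). That yields divergence only from particular initialisations. At \(\spr(L)=1\) it gives \(\langle P^\ast,\Sigma_k\rangle=\langle P^\ast,\Sigma_0\rangle\) for all \(k\), so no divergence at all. The stationary step has the same hole: \(C\neq0\) and \(P^\ast\succeq0\) do not imply \(\langle P^\ast,C\rangle>0\). Moreover, the standard subinvariance bound on the PSD cone requires \(\Sigma\) in the interior, which fails, for instance, whenever a row of \(\mathcal A\) vanishes (that component then has zero stationary variance).

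What is missing is a forcing term with strictly positive projection on the Perron direction, for every initialisation. The paper obtains it by replacing \(\Sigma_k\) with \(Q_k=\E[\bm M_k\bm M_k^\top]=\Sigma_k+\bm m_k\bm m_k^\top\). This matrix:
\begin{enumerate}
\item is entrywise nonnegative because \(\bm M_k\geq0\), which removes your sign problem with \(\vecop\Sigma_k\);
\item satisfies \(\vecop Q_{k+1}=L\,\vecop Q_k+\vecop H_k\) with the same \(L\);
\item has forcing \(H_k\geq\bm\omega\bm\omega^\top>0\) entrywise.
\end{enumerate}
When \(\spr(L)\geq1\), any left Perron vector \(\bm w\geq0\) of \(L\) then gives \(\bm w^\top\vecop Q_{k+1}\geq\bm w^\top\vecop Q_k+c\) with \(c=\bm w^\top\vecop(\bm\omega\bm\omega^\top)>0\). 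This holds at \(\spr(L)=1\) too and needs no irreducibility. If \(\spr(G)<1\), the bounded mean transfers the divergence to \(\Sigma_k\). If \(\spr(G)\geq1\), the paper uses \(\bm v^\top\Sigma_k\bm v\) with \(\bm v\) the left Perron vector of \(G\), where \(\mathcal A^\top\bm v\neq0\) because \(\rho_i=1-\beta_i\Delta t<1\leq\spr(G)\). Stationarity follows from \((1-\spr(L))\,\bm w^\top\vecop Q=\bm w^\top\vecop H>0\).

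Alternatively, your cone route can be completed, and then needs no case split on \(\spr(G)\), by proving that the case you set aside cannot occur. Let \(\mathcal L^\ast(Y)=G^\top YG+\tfrac1r\diag\big((\mathcal A^\top Y\mathcal A)_{11},\ldots,(\mathcal A^\top Y\mathcal A)_{mm}\big)\) be the trace adjoint of \(\mathcal L\). Suppose \(\langle P^\ast,\mathcal A D_k\mathcal A^\top\rangle=0\) with \(D_k\) positive diagonal. Then \(\mathcal A^\top P^\ast\mathcal A=0\), so \(P^\ast\mathcal A=0\) and \(\mathcal L^\ast(P^\ast)=BP^\ast B\). Its diagonal gives \(\rho_i^2P^\ast_{ii}=\spr(L)P^\ast_{ii}\), hence \(P^\ast=0\) when \(\spr(L)\geq1\), a contradiction. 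As written, though, that lemma is absent, and the proposed workaround does not establish the statement.
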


The proof is given in Appendix~\ref{app:stability_proof}. For \(m=1\), Proposition~\ref{prop:stability}(ii) reduces to \(\gamma^2+a^2/r<1\), which is the second-order stationarity condition of the negative-binomial INGARCH(1,1) model \citep[Remark~3]{zhu2011negative} and the time-homogeneous case of the corresponding condition for periodically stationary models \citep{bracher2017periodic}. With the conditional mean and variance of \eqref{eq:nb}, \(1-\gamma^2-a^2/r\) is the denominator of the stationary variance of the same model \citep{christou2014quasi}. For multivariate models of this type, the stationary second moments have been obtained by iterating the moment recursions under the assumption that the process is second-order stationary \citep{bracher2017periodic}. Proposition~\ref{prop:stability}(ii) gives the condition under which the second-moment recursion of the network model converges, and the recursion diverges otherwise. The second-moment conditions \eqref{eq:second_order_exact}--\eqref{eq:second_order_row} tighten as \(r\) decreases and couple \(A\) to \(r\), whereas the mean-stability condition \(\spr(G)<1\) is independent of \(r\). Thus, a network can be stable in mean while its covariance diverges in a strongly overdispersed regime.

\section{Negative-binomial ensemble filter}\label{sec:filter}

The general mechanism of the proposed filter can be briefly described as follows. An ensemble of \(\Ne\) members \((\lambda_k^{(s)},\theta_k^{(s)})\), \(s=1,\ldots,\Ne\), is used to approximate the uncertainty in the intensity and the parameters. Each member carries its own intensity vector and its own parameter values, and the filter revises every member as each count vector arrives. A time step consists of a forecast and an analysis. In the forecast each member is moved one step ahead under the model with its own parameters, which gives the ensemble's prediction of the next count vector. In the analysis the count vector that has just been observed is compared with that prediction and each member is corrected towards it in two stages, first in its intensity and then in its parameters. Quantities before the update carry the superscript \(f\) for forecast and quantities after it the superscript \(a\) for analysis. Thus, \(\lambda^{f,(s)}\) is a member's predicted intensity and \(\lambda^{a,(s)}\) its corrected value. A superscript \(\dagger\) denotes the data-generating value in simulations. The true network is written \(A^\dagger\) when it is constant and \(A_k^\dagger\) when it is piecewise constant.

In the forecast the parameter members are first perturbed slightly to replenish the spread that repeated updates remove and to keep the ensemble able to respond to changes in the network. Section~\ref{sec:adaptability} specifies these perturbations for \(\mu\), \(\beta\) and \(A\). The intensity is then propagated through \eqref{eq:model} with each member's forecast parameters and the observed counts,
\begin{equation}
\lambda_{k+1}^{f,(s)}=\mu_{k+1}^{f,(s)}+\big(1-\beta_{k+1}^{f,(s)}\Delta t\big)\big(\lambda_k^{a,(s)}-\mu_{k+1}^{f,(s)}\big)+A_{k+1}^{f,(s)}Y_k ,
\label{eq:intensity_forecast}
\end{equation}
so the observed history is common to all members while their intensities differ through the parameter ensemble.

\subsection{Parameter forecast}\label{sec:adaptability}

Repeated analysis updates of a nominally fixed parameter would reduce the differences between members so that the empirical parameter variance severely understates the remaining uncertainty. This finite-ensemble underdispersion, or inbreeding, is a common issue in EnKF and is especially consequential for parameters, which have no forecast dynamics to replenish lost spread \citep{houtekamer1998enkf,ruckstuhl2018parameters}. Thus, the parameter ensemble has to be perturbed at each forecast step. This strategy has long been used for static-parameter learning in particle filters \citep{liu2001combined}, and Section~\ref{sec:regression} shows that without it the regression update cannot move a parameter whose ensemble spread has collapsed.

The simplest forecast for a nonnegative parameter is to multiply the value held by each member by an independent gamma variate with mean one. For the baseline rate \(\mu_i\), the decay rate \(\beta_i\) and the excitation coefficients \(\alpha_{ij}\) of member \(s\), this gives
\begin{equation}
\begin{gathered}
\mu_{k+1,i}^{f,(s)}=\zeta_{k,i}^{\mu,(s)}\,\mu_{k,i}^{a,(s)},\qquad
\beta_{k+1,i}^{f,(s)}=\zeta_{k,i}^{\beta,(s)}\,\beta_{k,i}^{a,(s)},\qquad
\alpha_{k+1,ij}^{f,(s)}=\zeta_{k,ij}^{\alpha,(s)}\,\alpha_{k,ij}^{a,(s)},\\
\zeta_{k,i}^{\mu,(s)},\ \zeta_{k,i}^{\beta,(s)},\ \zeta_{k,ij}^{\alpha,(s)}\sim\Gam\big(\text{shape }\sigma_\zeta^{-2},\ \text{scale }\sigma_\zeta^{2}\big),
\end{gathered}
\label{eq:jitter}
\end{equation}
with all multipliers drawn independently across parameters, members and steps. Each multiplier has mean one and relative standard deviation \(\sigma_\zeta\). Thus, this ``jitter'' keeps every parameter nonnegative and leaves the ensemble mean unchanged in expectation. Since a multiplier above one can carry \(\beta_i\Delta t\) past one, the jittered members have to be projected onto \eqref{eq:domain} before the intensity is propagated. Note also that since the perturbation of each member is proportional to its current value, the jitter restores almost no forecast variance at an entry of \(A\) that the analysis updates have reduced to near zero. This becomes an issue when the network changes because such an entry can then remain near zero.

For the matrix \(A\) the composition forecast is an alternative that may avoid this issue. It is designed for changes that redistribute influence among source nodes. For member \(s\) and destination row \(i\), let \(S_{k,i}^{(s)}=\sum_j\alpha_{k,ij}^{a,(s)}\) be the analysed row total and \(p_{k,ij}^{a,(s)}=\alpha_{k,ij}^{a,(s)}/S_{k,i}^{(s)}\) its composition. Before renormalisation a probability floor \(p_{\min}\) keeps every edge strictly positive after an analysed value approaches zero. If \(S_{k,i}^{(s)}=0\) the composition is set to \(1/m\) by convention while the row remains at zero. The composition is floored, diffused in centred log-ratio coordinates and mapped back by the softmax at fixed total, i.e.,
\begin{equation}
\begin{aligned}
\tilde p_{k,ij}^{a,(s)}&=\frac{\max\{p_{k,ij}^{a,(s)},p_{\min}\}}{\sum_{\ell}\max\{p_{k,i\ell}^{a,(s)},p_{\min}\}},\\
z_{k,ij}^{a,(s)}&=\log\tilde p_{k,ij}^{a,(s)}-\frac1m\sum_{\ell}\log\tilde p_{k,i\ell}^{a,(s)},\\
z_{k+1,ij}^{f,(s)}&=z_{k,ij}^{a,(s)}+\xi_{k,ij}^{(s)}-\frac1m\sum_{\ell}\xi_{k,i\ell}^{(s)},
\qquad \xi_{k,ij}^{(s)}\sim\mathcal N(0,q_A^2),\\
p_{k+1,ij}^{f,(s)}&=\frac{\exp(z_{k+1,ij}^{f,(s)})}{\sum_{\ell}\exp(z_{k+1,i\ell}^{f,(s)})},
\qquad
\alpha_{k+1,ij}^{f,(s)}=S_{k,i}^{(s)}\,p_{k+1,ij}^{f,(s)} .
\end{aligned}
\label{eq:composition}
\end{equation}

The centred log-ratio map followed by the softmax is the identity on the simplex, and the perturbation stays in the sum-zero subspace. Thus, with \(q_A=0\) and \(p_{\min}=0\) the construction reproduces the analysed composition. With \(p_{\min}>0\) it reproduces the floored and renormalised composition.
The composition forecast maintains adaptability through the probability floor \(p_{\min}\), which keeps every edge in the interior of the simplex, and the centred log-ratio perturbation with scale \(q_A\), which supplies forecast variation while preserving each member's row total. The parameters \(\mu\) and \(\beta\) are jittered by \eqref{eq:jitter} in either case, and the two forecasts for \(A\) are compared in the network-change experiment of Section~\ref{sec:tracking_results}.

Holding the row total fixed preserves the row-sum bound \eqref{eq:first_order} and the row-wise second-moment bound \eqref{eq:second_order_row} for fixed \(\beta\) and \(r\). The exact spectral radius \(\spr(B+\mathcal A)\) can still change as mass moves within a row. The floor \(p_{\min}\) moves a previously absent edge into the interior of the simplex, where \(q_A\) supplies forecast variation. Without the floor the absolute perturbation of a small edge remains of order \(q_A\alpha_{ij}\). Although the perturbation has mean zero in log-ratio coordinates, the nonlinear softmax does not preserve the ensemble mean on the simplex. Its leading mean effect is of order \(q_A^2\), and repeated diffusion without analysis updates tends to flatten the ensemble-mean composition toward uniformity. In the filter this tendency is opposed by the analysis update. Most importantly for adaptation, the forecast replenishes \(\Var(\alpha_{ij}^f)\) which allows informative lagged counts at source node \(j\) to restore \(\Cov(\alpha_{ij}^f,M^f)\).

\subsection{Stage 1: Negative-binomial analysis update}\label{sec:analysis}

Let \(\theta_k=(\mu_k,\beta_k,A_k)\), let \(\mathcal Y_k=\{Y_1,\ldots,Y_k\}\), and denote the filtering target by \(p(\lambda_k,\theta_k\mid\mathcal Y_k)\). The forecast ensemble \((\lambda_k^{f,(s)},\theta_k^{f,(s)})\) is converted to an analysis ensemble \((\lambda_k^{a,(s)},\theta_k^{a,(s)})\) in two stages. The intensity is updated first followed by a regression update of the parameters. Since the observation model factorises across nodes, Stage 1 is described for one node with the index \(i\) suppressed. Dependence across nodes enters through the forecast and through the regression of Stage 2 in Section~\ref{sec:regression}.

For fixed \(r\) put \(u=M/(r+M)\in(0,1)\) with inverse \(M=ru/(1-u)\). In terms of \(u\) the negative-binomial likelihood is proportional to \(u^{y}(1-u)^{r}\), a beta kernel, so if the forecast law of \(u\) is approximated by \(\Beta(a_f,b_f)\), the posterior after observing \(y\) is
\begin{equation}
u^{a}\mid y\sim\Beta(a_f+y,\;b_f+r).
\label{eq:beta_update}
\end{equation}
The beta parameters are obtained by moment matching the transformed forecast ensemble \(u^{f,(s)}=M^{f,(s)}/(r+M^{f,(s)})\), where \(M^{f,(s)}=\lambda^{f,(s)}\Delta t\). With \(\bar u^f\) and \(v_u^f\) the ensemble mean and variance, \(\kappa_f=\bar u^f(1-\bar u^f)/v_u^f-1\), \(a_f=\kappa_f\bar u^f\), and \(b_f=\kappa_f(1-\bar u^f)\), after truncating \(\bar u^f\) and \(v_u^f\) away from degenerate values so that \(a_f,b_f>0\). Conjugacy then gives \eqref{eq:beta_update} exactly for fixed \(r\).

The inverse transformation imposes a constraint on the update. Under it, \(M/r\) has a beta-prime distribution with parameters \((a,b)\), whose mean is finite only if \(b>1\) and whose variance is finite only if \(b>2\). The analysed count mean therefore has finite variance only when
\begin{equation}
b_f+r>2 .
\label{eq:bf_condition}
\end{equation}
This constraint is operative because the memberwise increment of the analysed count mean enters the regression update of Section~\ref{sec:regression}. Flooring \(b_f\) alone would shift the mean \(a_f/(a_f+b_f)\) of the approximation. We instead apply the floor through the concentration parameter. With a small tolerance \(\epsilon_b>0\) and \(b_\star(r)=\max\{\epsilon_b,\,2-r+\epsilon_b\}\),
\begin{equation}
\kappa_f=\max\Big\{\frac{\bar u^f(1-\bar u^f)}{v_u^f}-1,\ \frac{b_\star(r)}{1-\bar u^f}\Big\},
\qquad a_f=\kappa_f\bar u^f,\qquad b_f=\kappa_f(1-\bar u^f),
\label{eq:concentration_floor}
\end{equation}
which gives \(b_f+r\geq2+\epsilon_b\) while leaving the beta mean equal to \(\bar u^f\). The value \(b_\star\) depends on the current dispersion estimate and is recomputed whenever \(r\) is updated. When the floor is active, the beta approximation has a smaller variance than the forecast ensemble, and the analysis understates the forecast uncertainty.

The parameter regression uses the memberwise increment \(\Delta M^{(s)}=M^{a,(s)}-M^{f,(s)}\), so the posterior draws must be paired with the forecast members. We draw \(\widetilde u^{a,(s)}\sim\Beta(a_f+y,\,b_f+r)\), sort the draws, and assign them by forecast rank. If \(\pi_f\) orders the forecast ensemble, \(u^{f,(\pi_f(1))}\leq\cdots\leq u^{f,(\pi_f(\Ne))}\), and \(\widetilde u^{a,(1)}\leq\cdots\leq\widetilde u^{a,(\Ne)}\) are the sorted draws, then
\begin{equation}
u^{a,(\pi_f(q))}=\widetilde u^{a,(q)},\qquad q=1,\ldots,\Ne .
\label{eq:rank}
\end{equation}
Equation~\eqref{eq:rank} pairs low forecast values with low posterior draws and high values with high draws. This retains the ordering on which the regression transform relies. It also minimises the total squared memberwise increment among all assignments of the same posterior sample by the rearrangement inequality \citep{hardy1952inequalities}. This ordered assignment is motivated by the two-stage ensemble construction for an EnKF implementation \citep{anderson2003local}. Sampling represents the beta posterior on its bounded support, including its skewness, while the ordered assignment provides the memberwise coupling required by the regression update. Finally, the analysed ensemble is mapped back by \(M^{a,(s)}=ru^{a,(s)}/(1-u^{a,(s)})\) and \(\lambda^{a,(s)}=M^{a,(s)}/\Delta t\). The update requires two sorts per node and step with cost \(O(\Ne\log\Ne)\).

\subsection{Stage 2: Regression update of the parameters}\label{sec:regression}

For a generic parameter \(q\in\{\mu_i,\beta_i,\alpha_{ij}\}\) of row \(i\), the intensity increment of node \(i\) is transformed into a parameter increment by the finite-ensemble linear regression
\begin{equation}
q^{a,(s)}
=
q^{f,(s)}
+
\frac{\Cov(q^f,M^f)}{\Var(M^f)}
\left(M^{a,(s)}-M^{f,(s)}\right),
\label{eq:regression}
\end{equation}
with covariances and variances taken across the ensemble. This is an approximate linear transform of the intensity increment into the conditional parameter update induced by the analysed count mean, which discards nonlinear dependence.

The regression coefficient \(K_{\alpha,ij}=\Cov(\alpha_{ij}^f,M^f)/\Var(M^f)\) of an edge depends on the ensemble covariance between the coefficient and the forecast count mean. The finite-ensemble underdispersion described in Section~\ref{sec:adaptability} shrinks this covariance. By the Cauchy--Schwarz inequality \(|\Cov(\alpha_{ij}^f,M^f)|\leq\{\Var(\alpha_{ij}^f)\Var(M^f)\}^{1/2}\). It follows that if \(\Var(M^f)\geq d_0>0\) and the innovation satisfies \(|M^{a,(s)}-M^{f,(s)}|\leq C\), the memberwise increment obeys
\begin{equation}
\left|K_{\alpha,ij}\left(M^{a,(s)}-M^{f,(s)}\right)\right|
\leq
C\left\{\frac{\Var(\alpha_{ij}^{f})}{\Var(M^f)}\right\}^{1/2}
\leq
C\left\{\frac{\Var(\alpha_{ij}^{f})}{d_0}\right\}^{1/2}
\longrightarrow 0
\qquad\text{as }\Var(\alpha_{ij}^f)\to0 .
\label{eq:collapse}
\end{equation}
This establishes the link between ensemble collapse and loss of adaptability. The parameter update becomes small whatever the count innovation within the bound, so the regression update can move \(\alpha_{ij}\) again only after the parameter forecast has restored \(\Var(\alpha_{ij}^f)\).

After this update each ensemble member is projected onto the parameter domain \eqref{eq:domain}. The projection is required because a member with a nonpositive conditional mean makes the transform \(u\) undefined. The projection is one-sided for \(\alpha_{ij}\) and therefore biases the ensemble mean upward for edges whose true value is zero or near zero.

\subsection{Online estimation of dispersion}\label{sec:dispersion}

When \(r\) is unknown, its update is delayed and damped because early residuals reflect both overdispersion and inaccurate network estimates. The following analysis uses one common \(r\) for all nodes but it can be readily adapted to a node-dependent \(r\). For \(h\geq1\) the lag-\(h\) autocovariance of the counts is \(G^{h-1}\Cov(\bm M_{k+1},\bm Y_k)\). Therefore, the rate at which it decays is fixed by the mean recursion and does not involve \(r\), which enters only through the level of the second moments.

The identity \(\Var(Y\mid M,r)-M=M^2/r\) suggests replacing \(M\) by its one-step forecast mean and forming a ratio of sums, but keeping in mind that the forecast mean is not the true conditional mean. Writing \(\bar M^f_{\ell,i}\) and \(V^f_{\ell,i}\) for the ensemble mean and variance of the forecast count mean, let \(\mathcal F_{\ell-1}\) denote the information set generated by \(\mathcal Y_{\ell-1}\) together with the filter's own randomisation up to the forecast at step \(\ell\) so that the forecast ensemble is \(\mathcal F_{\ell-1}\)-measurable although not \(\mathcal Y_{\ell-1}\)-measurable. Since \(\E[Y_{\ell,i}\mid M_{\ell,i}]=M_{\ell,i}\) and \(\bar M^f_{\ell,i}\) is \(\mathcal F_{\ell-1}\)-measurable, the expected squared residual is exactly
\begin{equation}
\E\!\left[(Y_{\ell,i}-\bar M^f_{\ell,i})^2\,\middle|\,\mathcal F_{\ell-1}\right]
=
\E\!\left[\Var(Y_{\ell,i}\mid M_{\ell,i})\,\middle|\,\mathcal F_{\ell-1}\right]
+
\E\!\left[(M_{\ell,i}-\bar M^f_{\ell,i})^2\,\middle|\,\mathcal F_{\ell-1}\right],
\label{eq:total_variance}
\end{equation}
in which the first term is \(\E[M_{\ell,i}\mid\mathcal F_{\ell-1}]+\E[M_{\ell,i}^2\mid\mathcal F_{\ell-1}]/r\) by \eqref{eq:nb}. The predictive mean and variance of \(M_{\ell,i}\) are not available, and replacing them by the ensemble mean and variance, that is \(\E[M_{\ell,i}\mid\mathcal F_{\ell-1}]\approx\bar M^f_{\ell,i}\) and \(\E[(M_{\ell,i}-\bar M^f_{\ell,i})^2\mid\mathcal F_{\ell-1}]\approx V^f_{\ell,i}\), gives
\begin{equation}
\E\!\left[(Y_{\ell,i}-\bar M_{\ell,i}^f)^2\,\middle|\,\mathcal F_{\ell-1}\right]
\approx
\bar M_{\ell,i}^f+\frac{(\bar M_{\ell,i}^f)^2}{r}+\Big(1+\frac1r\Big)V_{\ell,i}^f .
\label{eq:r_identity}
\end{equation}
The error of \eqref{eq:r_identity} is the error of this ensemble approximation. The term \((1+1/r)V^f_{\ell,i}\) is the contribution of forecast uncertainty to the expected squared residual. Omitting it would attribute forecast spread to observation overdispersion and would therefore tend to underestimate \(r\), particularly when the forecast ensemble is broad relative to its mean. Solving \eqref{eq:r_identity} for \(r\) and summing numerators and denominators over a window \(\mathcal W_k\) of recent forecast--observation pairs gives the forecast-spread-adjusted rolling estimator \(N_k/D_k\) with
\begin{equation}
N_k=\sum_{(\ell,i)\in\mathcal W_k}\big[(\bar M_{\ell,i}^f)^2+V_{\ell,i}^f\big],
\qquad
D_k=\sum_{(\ell,i)\in\mathcal W_k}\big[(Y_{\ell,i}-\bar M_{\ell,i}^f)^2-\bar M_{\ell,i}^f-V_{\ell,i}^f\big].
\label{eq:r_estimator}
\end{equation}
When \(V^f\equiv0\), this reduces to the usual residual-moment estimator. The adjustment uses only quantities already computed by the filter, but its accuracy depends on how well the ensemble mean and variance reproduce the predictive moments of \(M_{\ell,i}\). It therefore corrects for forecast uncertainty only to the extent that the ensemble approximation is reliable and does not establish that the resulting ratio estimator is unbiased. The estimate is \(\widehat r_k=r_{\max}\) if \(D_k\leq0\) and \(\widehat r_k=\clip(N_k/D_k,r_{\min},r_{\max})\) otherwise, and it is updated on the log scale,
\begin{equation}
\log r_k=(1-\eta_r)\log r_{k-1}+\eta_r\log\widehat r_k ,
\label{eq:r_update}
\end{equation}
with a small \(\eta_r\in(0,1)\). The update starts only after a burn-in fraction \(f_r\) of the record. Since \(N_k/D_k\) is a ratio of sums of squared residuals, it weights observations approximately as \((\bar M^f)^2\) and is a noisy, potentially heavy-tailed statistic in strongly overdispersed regimes. The delay, window and damping limit this effect, and a trimmed or Huberised residual moment is a possible refinement.

\subsection{Algorithm summary}\label{sec:algorithm}

Algorithm~\ref{alg:nbenpgf} states the complete procedure. For every node \(i\), each assimilation step first applies the parameter forecast. This is the gamma jitter \eqref{eq:jitter} except for row \(i\) of \(A\) when the composition forecast \eqref{eq:composition} is enabled. It then projects the parameters onto \eqref{eq:domain}, propagates \(\lambda_i\) through \eqref{eq:model}, transforms to \(u\), and moment-matches a beta forecast with the concentration floor \eqref{eq:concentration_floor}. Draws from \eqref{eq:beta_update} are assigned by rank using \eqref{eq:rank}. The parameters \(\mu_i\), \(\beta_i\) and \(\alpha_{ij}\) are updated by \eqref{eq:regression} and the result is projected onto \eqref{eq:domain}. After all nodes have been processed, the stored forecast means and variances enter the scheduled update of \(r\) by \eqref{eq:r_estimator}. Every member must satisfy \eqref{eq:domain}, and the beta approximation must satisfy \eqref{eq:bf_condition} through \eqref{eq:concentration_floor} with \(b_\star(r)\) recomputed whenever \(r\) changes. The composition forecast uses the zero-row convention of Section~\ref{sec:adaptability}. When it is disabled the gamma jitter \eqref{eq:jitter} is applied to \(A\).

The analysis update has three principal approximations: the beta forecast is obtained by moment matching, the posterior draws are coupled to the forecast members by rank, and the parameter update \eqref{eq:regression} is linear. The parameter forecast introduces an additional adaptation--stability trade-off because larger perturbations improve responsiveness but increase variation when the network is constant.

\begin{algorithm}[t]
\caption{Negative-binomial ensemble filter with optional composition forecast}
\label{alg:nbenpgf}
\begin{algorithmic}[1]
\Require Counts \(Y_{k,i}\), ensemble size \(\Ne\), observation interval \(\Delta t\), initial ensembles of \((\lambda_i,\mu_i,\beta_i,\alpha_{ij})\), initial dispersion \(r_0\), tolerance \(\epsilon_b\), jitter scale \(\sigma_\zeta\), dispersion parameters \((f_r,\eta_r,r_{\min},r_{\max})\) and, optionally, \((q_A,p_{\min})\).
\State \(r\leftarrow r_0\).
\For{\(k=0,1,\ldots,n-1\)}
    \For{\(i=1,\ldots,m\)}
        \State Multiply \(\mu_i\) and \(\beta_i\) of each member by the gamma jitter \eqref{eq:jitter}.
        \If{composition forecast enabled} \State Evolve row \(i\) of each \(A^{(s)}\) by \eqref{eq:composition}. \Else \State Apply the gamma jitter \eqref{eq:jitter} to row \(i\) of each \(A^{(s)}\). \EndIf
        \State Project onto \eqref{eq:domain}.
        \State \(\lambda_{k+1,i}^{f,(s)}\leftarrow\) \eqref{eq:model} with member parameters and \(Y_k\).
        \State Set \(M^{f,(s)}=\lambda^{f,(s)}_{k+1,i}\Delta t\) and \(u^{f,(s)}=M^{f,(s)}/(r+M^{f,(s)})\).
        \State Moment-match \(\Beta(a_f,b_f)\) by \eqref{eq:concentration_floor} with \(b_\star(r)\).
        \State Observe \(y=Y_{k+1,i}\) and draw \(\widetilde u^{a,(s)}\sim\Beta(a_f+y,b_f+r)\).
        \State Assign by rank \eqref{eq:rank} and map back to \(M^{a,(s)},\lambda^{a,(s)}\).
        \State Update \(\mu_i,\beta_i,\alpha_{ij}\) by \eqref{eq:regression}.
        \State Project onto \eqref{eq:domain}.
    \EndFor
    \State Store \(\bar M^f\), \(V^f\) and \(Y\) for dispersion learning.
    \If{\(k\geq\lfloor f_rn\rfloor\) and an \(r\)-update is scheduled}
        \State Form \(N_k,D_k\) from \eqref{eq:r_estimator}.
        \State \(\widehat r\leftarrow r_{\max}\) if \(D_k\leq0\), else \(\widehat r\leftarrow\clip(N_k/D_k,r_{\min},r_{\max})\).
        \State \(\log r\leftarrow(1-\eta_r)\log r+\eta_r\log\widehat r\).
        \State Recompute \(b_\star(r)\).
    \EndIf
\EndFor
\State \Return Ensembles of \(\lambda_k,\mu,\beta,A_k\) and the trajectory of \(r_k\).
\end{algorithmic}
\end{algorithm}

\section{Numerical experiments}\label{sec:experiments}

\subsection{Experimental design and evaluation}\label{sec:design}

In this section, two twin experiments are carried out in which the data-generating model and the inference model share the structure \eqref{eq:model}--\eqref{eq:nb}, so the true excitation matrix is known at every time. Counts are simulated through the gamma--Poisson mixture representation of \eqref{eq:nb}, and both experiments use the implementation of Section~\ref{sec:algorithm}. Twenty paired replications are run in each experiment, and in each replication every filter receives identical counts, an identical initial ensemble and the same starting random state.

The static experiment has \(m=6\), \(n=10000\) assimilation steps, \(\Ne=1000\), \(\Delta t=1\) and a constant truth \(A^\dagger\). Baselines and decays are heterogeneous, \(\mu_i^\dagger\sim\mathcal U(0.8,1.2)\) and \(\beta_i^\dagger\sim\mathcal U(0.45,0.65)\) independently, and the common dispersion is \(r^\dagger=0.5\). One admissible truth is drawn once and held fixed across replications, subject to \(\spr(G)<1\) and \(\spr(L)<0.97\) in Proposition~\ref{prop:stability}, its realised radii being \(0.9075\) and \(0.9398\), respectively. The stationary mean count is \(6.07\), so the conditional Fano factor \(1+M/r^\dagger\) is \(13.1\) and the information ratio \(r^\dagger/(r^\dagger+M)\) of Section~\ref{sec:model_spec} is \(0.076\), so each count carries about a thirteenth of the information that a Poisson count of the same mean would. Thus, the regime is strongly overdispersed. Two filters are run on this design. The first is the negative-binomial ensemble filter of Algorithm~\ref{alg:nbenpgf}, hereafter the NB filter. The second is the EnPGF in the form applied to Hawkes network reconstruction \citep{santitissadeekorn2023physd}. It starts from the shared initial ensemble, which the paired design requires. Both share the intensity forecast \eqref{eq:model} and the parameter forecast, namely the mean-one gamma jitter \eqref{eq:jitter} of the EnPGF applied to \(\mu\), \(\beta\) and \(A\) at every step with relative standard deviation \(\sigma_\zeta=0.32\%\). With the forecast shared, the two filters differ only in the analysis, so the comparison is between two complete filters and a difference between them cannot be attributed to the observation likelihood alone. The NB filter is initialised at the misspecified \(r_0=2.0\) and learns the dispersion by \eqref{eq:r_estimator} with the forecast-spread term retained. The dispersion update starts after the first \(10\%\) of the record, uses a rolling window of 500 observations and is applied every five steps with \(\eta_r=0.03\). The composition forecast is disabled.

The tracking experiment is designed to show that the composition forecast of Section~\ref{sec:adaptability}, which keeps the forecast variation of every edge large enough, adapts to a change in the network better than the gamma jitter. It has \(m=6\), \(n=20000\), \(\Ne=1000\), \(\Delta t=1\), \(r^\dagger=0.5\), \(r_0=2.0\) and a piecewise-constant truth, \(A_k^\dagger=A_{\mathrm{pre}}^\dagger\) for \(k\leq k_c\) and \(A_{\mathrm{post}}^\dagger\) for \(k>k_c\) with \(k_c=10000\). Neither the change point nor the affected entries are supplied to the filter. In each of the two destination rows carrying the largest off-diagonal incoming edge, that edge loses \(85\%\) of its value, which is added to the weakest off-diagonal edge of the same row, all other entries unchanged. The change preserves row totals and the row-wise bounds \eqref{eq:first_order} and \eqref{eq:second_order_row}, although it need not preserve the stationary mean. It moves \(\alpha_{42}\) from \(0.225\) to \(0.034\) and \(\alpha_{64}\) from \(0.217\) to \(0.033\), and it moves \(\alpha_{41}\) and \(\alpha_{61}\), which are exactly zero before the change, to \(0.191\) and \(0.185\), so the network goes from eighteen to twenty nonzero entries with \(\|A_{\mathrm{post}}^\dagger-A_{\mathrm{pre}}^\dagger\|_F/\|A_{\mathrm{pre}}^\dagger\|_F=0.565\). Both truths satisfy \(\spr(G)<1\) and \(\spr(L)<0.97\), the realised second-moment radii being \(0.9543\) before the change and \(0.6856\) after it. The two edges that are exactly zero before the change are deliberate, since an edge that the analysis updates have reduced to zero is the case in which the bound \eqref{eq:collapse} is tightest. Two configurations of the NB filter are run on this design and differ only in the forecast model applied to \(A\). The first uses the composition forecast \eqref{eq:composition} at every step with \(q_A=0.03\) and \(p_{\min}=10^{-3}\). The second replaces it by the mean-one gamma jitter of the static experiment at the same relative standard deviation of \(0.32\%\). All other settings are common to the two configurations, including the counts, the initial ensemble, the starting random state, the intensity forecast, the regression transform, the jitter applied to \(\mu\) and \(\beta\) and the dispersion schedule of the static experiment, so any difference between the two configurations is due to the forecast model applied to \(A\).

Let \(\bar A_k\) be the ensemble mean of the excitation matrix after assimilating counts up to time \(k\), taken after the projection onto \eqref{eq:domain} and before the next parameter forecast, and \(A_k^\dagger\) the truth active at \(k\). Network recovery is measured by the relative Frobenius error \(e_k=\|\bar A_k-A_k^\dagger\|_F/\|A_k^\dagger\|_F\). For the static experiment, \(\widehat A\) is the time average of \(\bar A_k\) over the second half of the record, \(\mathcal T=\{\lfloor n/2\rfloor,\ldots,n\}\), and \(e(\widehat A)=\|\widehat A-A^\dagger\|_F/\|A^\dagger\|_F\). The tail mean of the dispersion trajectory is defined by the same averaging. Every summary is computed within a replication and then averaged over the twenty replications, and every contrast between filters is a paired difference across replications reported with a \(95\%\) \(t\) interval on \(19\) degrees of freedom. Since all replications of an experiment share one truth, these intervals reflect sampling variability in the counts and the initial ensembles and not variability across networks. Time averaging removes step-to-step fluctuation, so \(e(\widehat A)\) is generally below \(e_n\), and averaging \(\widehat A\) across replications removes a further sampling component, so the error of the replication-averaged matrices shown in the figures is lower still.

Three further quantities summarise the static experiment for each filter, all evaluated on the forecast ensemble before assimilation and averaged over \(\mathcal T\). Parameter calibration is measured by the inclusion rate, the fraction of edges in \(\mathcal E=\{(i,j):\alpha^\dagger_{ij}>0\}\) whose true value lies between the \(5\%\) and \(95\%\) member quantiles, nominal \(0.90\). Edges with \(\alpha^\dagger_{ij}=0\) are excluded from \(\mathcal E\) since every member is nonnegative after the projection onto \eqref{eq:domain}, so a true value of zero can never lie inside the member interval and such an edge would count as a miss for every filter. The ensemble spread is compared with the error of the ensemble mean over the same edges by
\begin{equation}
\mathrm{SS}=\left\{\frac{(1+\Ne^{-1})\sum_{k\in\mathcal T}\sum_{(i,j)\in\mathcal E}\operatorname{sd}_k(\alpha_{ij})^2}{\sum_{k\in\mathcal T}\sum_{(i,j)\in\mathcal E}(\bar\alpha_{k,ij}-\alpha^\dagger_{ij})^2}\right\}^{1/2},
\label{eq:spread_skill}
\end{equation}
in which \(\operatorname{sd}_k\) is the memberwise standard deviation at time \(k\) and the factor \(1+\Ne^{-1}\) corrects for comparing member spread with the error of a mean of \(\Ne\) exchangeable members. A value of one is the calibrated case, below one an over-confident ensemble and above one an over-dispersed one. Predictive calibration is assessed on the one-step-ahead predictive distribution of each count, the ensemble mixture with mass function \(\Ne^{-1}\sum_{s}p(y\mid M^{f,(s)}_{k,i})\), where \(p\) is the negative-binomial mass function at the current dispersion estimate for the NB filter and the Poisson mass function for the EnPGF. It is measured by the log score of this mixture at the observed counts,
\begin{equation}
S_k=\frac1m\sum_{i=1}^m\log\Big\{\frac1{\Ne}\sum_{s=1}^{\Ne}p\big(Y_{k,i}\mid M^{f,(s)}_{k,i}\big)\Big\},
\label{eq:mixture_log_score}
\end{equation}
by the \(5\%\)--\(95\%\) coverage of the same mixture, nominal \(0.90\), and by its randomised probability integral transform \citep[PIT,][]{czado2009predictive}. The PIT mean is compared with \(0.5\), and the PIT variance divided by the uniform value \(1/12\) gives the variance ratio, whose nominal value is one. The mixture form is used rather than a plug-in score at \(\bar M^f\) because the ensemble spread is what is being measured. These three quantities need no knowledge of the truth and are the diagnostics used in the application of Section~\ref{sec:chicago}.

For the tracking experiment, the pre- and post-change estimates \(\widehat A_{\mathrm{pre}}\) and \(\widehat A_{\mathrm{post}}\) average \(\bar A_k\) over the final \(15\%\) of the pre-change segment and the final \(35\%\) of the post-change segment, their errors to the truth active in each window are \(e_{\mathrm{pre}}\) and \(e_{\mathrm{post}}\), and recovery of the change is measured by
\begin{equation}
e_{\Delta A}=\frac{\|(\widehat A_{\mathrm{post}}-\widehat A_{\mathrm{pre}})-(A_{\mathrm{post}}^\dagger-A_{\mathrm{pre}}^\dagger)\|_F}{\|A_{\mathrm{post}}^\dagger-A_{\mathrm{pre}}^\dagger\|_F}.
\label{eq:change_error}
\end{equation}
The inclusion rate and the spread-to-skill ratio \eqref{eq:spread_skill} are computed over the same two windows with \(\mathcal E\) taken from the truth active in the window, and the predictive diagnostics over the whole record.

\subsection{Results of the static experiment}\label{sec:static_results}

The initial ensemble mean is effectively flat (Figure~\ref{fig:static_networks}). Its entries lie between \(0.085\) and \(0.086\), and its relative Frobenius error is \(0.687\). The NB filter reaches \(e(\widehat A)=0.124\) against \(0.154\) for the EnPGF (Table~\ref{tab:static}), a paired difference of \(-0.030\) with interval \([-0.040,-0.019]\). Panel (a) of Figure~\ref{fig:enpgf_comparison} shows the error \(e_k\) of the NB filter below that of the EnPGF for most of the record. In the replication-averaged estimate of Figure~\ref{fig:static_networks}, whose error is \(0.057\), moderate and strong edges are recovered accurately and the twelve structural zeros stay below \(0.014\). Neither filter resolves the weakest edge, whose coefficient of \(0.015\) carries too little information at \(r^\dagger=0.5\) to be distinguished from absence in ten thousand steps. Similar point accuracy is consistent with quasi-likelihood results for count time series, under which the Poisson score remains a valid estimating function when the conditional mean is correctly specified \citep{fokianos2009poisson,christou2014quasi,ahmad2016qmle,davis2016theory}, even though those results concern batch estimation rather than sequential filtering.

The dispersion is learned from the misspecified \(r_0=2.0\). The median estimate over replications first falls below \(0.60\) at step 1480, and the mean of the second-half estimates across replications is \(0.553\) with a standard deviation of \(0.021\), against \(r^\dagger=0.5\).

The calibration diagnostics of the two filters differ by an order of magnitude in the direction anticipated in Section~\ref{sec:intro}. Panel (b) of Figure~\ref{fig:enpgf_comparison} shows that the inclusion rate of the EnPGF falls below \(0.5\) within the first few hundred steps. Its mean over the second half of the record is \(0.445\), while that of the NB filter is \(0.898\) against a nominal \(0.90\). The spread-to-skill ratio \eqref{eq:spread_skill} is \(0.372\) against \(1.112\). The EnPGF's ensemble is thus about one third as wide as the error of its own mean, while the NB ensemble is close to calibrated. Panels (c) and (d) of Figure~\ref{fig:enpgf_comparison} show the corresponding difference in the one-step predictive distributions. The log score is \(-2.551\) against \(-5.893\), the predictive coverage is \(0.950\) against \(0.538\), and the PIT of the EnPGF is biased low and much too narrow. Both predictive distributions are ensemble mixtures, but their variances differ. With \(\bar M^f\) and \(V^f\) the forecast mean and variance of Section~\ref{sec:dispersion}, the Poisson mixture of the EnPGF and the negative-binomial mixture of the NB filter have variances
\begin{equation}
\begin{aligned}
\Var(Y)&=\bar M^f+V^f,\\
\Var(Y)&=\E\!\left[M^f+\frac{(M^f)^2}{r}\right]+V^f=\bar M^f+\frac{(\bar M^f)^2}{r}+\Big(1+\frac1r\Big)V^f,
\end{aligned}
\label{eq:mixture_var}
\end{equation}
the second being the exact counterpart for the finite mixture of the ensemble approximation \eqref{eq:r_identity}. The Poisson mixture omits both the conditional overdispersion term and its contribution from variation among the member means. At \(r^\dagger=0.5\) it can reach the width the data require only through \(V^f\), and the EnPGF's contracted ensemble makes that term small as well. Its intervals are therefore much too narrow, and its log score is dominated by the large counts to which it assigns very small probability. The NB coverage of \(0.950\) is above the nominal \(0.90\) because the counts are discrete. The PIT, which accounts for this discreteness, has mean \(0.491\) and variance ratio \(1.030\) (Table~\ref{tab:static}), close to the values \(0.5\) and \(1\) of a calibrated predictive distribution. The lower NB coverage over the first thousand steps in panel (d) of Figure~\ref{fig:enpgf_comparison} reflects the predictive distribution at \(r_0=2.0\), which is too narrow, before the dispersion update starts.

The direction of the parameter-spread result is the one predicted by the information ratio \(r/(r+M)\) of Section~\ref{sec:model_spec}, which is \(0.076\) here. The Poisson analysis removes more information from the intensity ensemble at each step than the counts carry, and the regression passes the excess contraction to the parameters. The jitter common to both filters replenishes spread at the same rate for each, and the two filters therefore retain different amounts of it. The early decrease of the inclusion rate in panel (b) of Figure~\ref{fig:enpgf_comparison} is common to both filters and is the finite-ensemble underdispersion described in Section~\ref{sec:adaptability}. The subsequent recovery of the NB filter, which the EnPGF does not share, reflects their distinct observation-analysis constructions.

\begin{figure}[t]
    \centering
    \IfFileExists{static3_networks.pdf}{%
        \includegraphics[width=\textwidth]{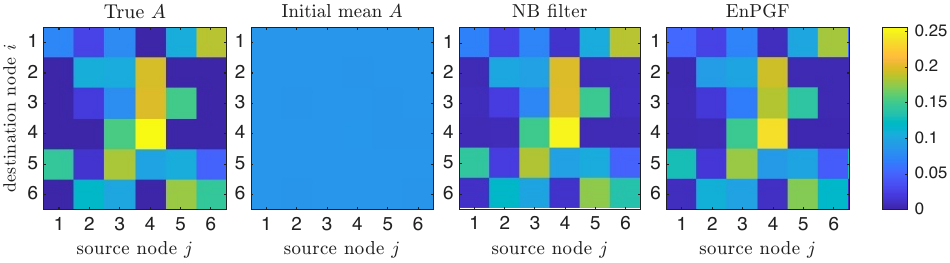}%
    }{%
        \fbox{\begin{minipage}[c][0.18\textheight][c]{0.92\textwidth}
        \centering Figure file \texttt{static3\_networks.pdf} not found.
        \end{minipage}}%
    }
    \caption{Static six-node twin experiment (\(n=10000\), \(\Ne=1000\), \(r^\dagger=0.5\), \(r_0=2.0\)). The excitation matrices share a common colour scale and show the truth \(A^\dagger\), the initial ensemble mean, and the estimates \(\widehat A\) for the two filters. Each estimate is averaged over the second half of the record and over the twenty paired replications, so its relative Frobenius error, \(0.057\) for the NB filter, is below the per-replication value reported in Table~\ref{tab:static}.}
    \label{fig:static_networks}
\end{figure}

\begin{table}[t]
\centering
\caption{Static experiment. Entries are means over replications with standard deviations in parentheses.}
\label{tab:static}
\footnotesize
\setlength{\tabcolsep}{5pt}
\begin{tabular}{lrrr}
\toprule
 & NB filter & EnPGF & NB \(-\) EnPGF \([95\%\ \mathrm{CI}]\) \\
\midrule
\(e(\widehat A)\)                                 & 0.124 (0.019) & 0.154 (0.015) & \(-0.030\ [-0.040,-0.019]\) \\
\(\mathrm{SS}\) \eqref{eq:spread_skill}           & 1.112 (0.132) & 0.372 (0.027) & \(+0.740\ [+0.682,+0.799]\) \\
inclusion rate                                    & 0.898 (0.045) & 0.445 (0.043) & \(+0.453\ [+0.432,+0.473]\) \\
log score \eqref{eq:mixture_log_score}            & \(-2.551\) (0.036) & \(-5.893\) (0.291) & \(+3.342\ [+3.222,+3.463]\) \\
predictive coverage                               & 0.950 (0.002) & 0.538 (0.011) & \(+0.412\ [+0.407,+0.417]\) \\
PIT mean                                          & 0.491 (0.004) & 0.354 (0.001) & \(+0.137\ [+0.136,+0.139]\) \\
PIT variance ratio                                & 1.030 (0.011) & 1.873 (0.013) & \(-0.843\ [-0.850,-0.837]\) \\
\(\widehat r\)                                    & 0.553 (0.021) & --- & --- \\
\bottomrule
\end{tabular}
\end{table}

\begin{figure}[t]
    \centering
    \IfFileExists{static3_metrics.pdf}{%
        \includegraphics[width=\textwidth]{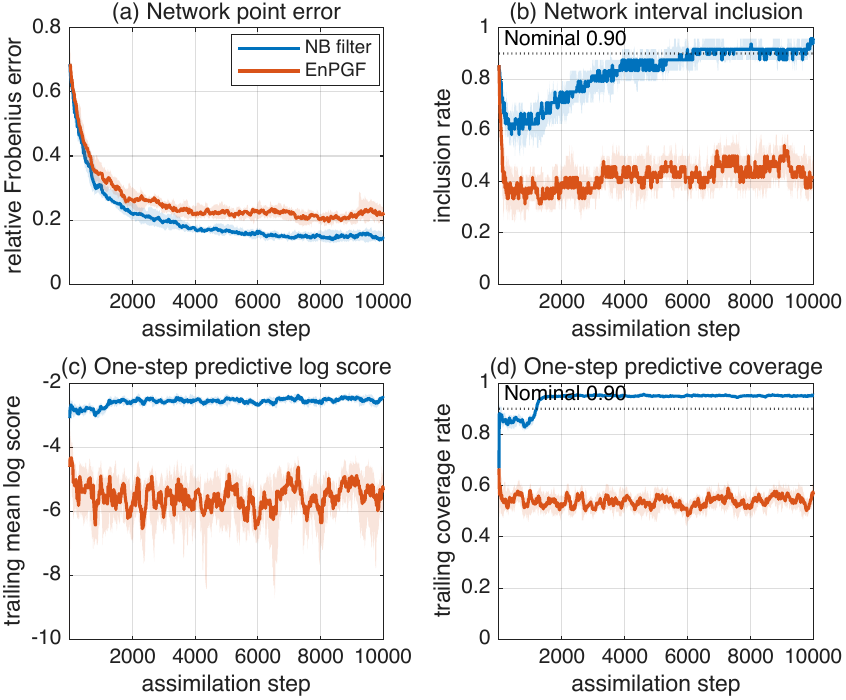}%
    }{%
        \fbox{\begin{minipage}[c][0.42\textheight][c]{0.92\textwidth}
        \centering Figure file \texttt{static3\_metrics.pdf} not found.
        \end{minipage}}%
    }
    \caption{Static experiment comparing the NB filter with the EnPGF. Curves show medians over the twenty replications with interquartile bands. (a) Relative Frobenius error \(e_k\) of the ensemble-mean excitation matrix. (b) Fraction of nonzero edges whose true coefficient lies in the \(5\%\)--\(95\%\) ensemble interval (nominal \(0.90\)). (c) One-step-ahead ensemble-mixture predictive log score \eqref{eq:mixture_log_score}, trailing mean, larger is better. (d) One-step-ahead \(5\%\)--\(95\%\) predictive coverage, trailing mean, nominal \(0.90\).}
    \label{fig:enpgf_comparison}
\end{figure}

\subsection{Results of the tracking experiment}\label{sec:tracking_results}

Before the change the gamma jitter is the more accurate of the two configurations (Table~\ref{tab:tracking}). Its error \(e_{\mathrm{pre}}\) is \(0.154\) against \(0.231\), a paired difference of \(+0.077\) with interval \([+0.060,+0.093]\). Section~\ref{sec:adaptability} gives the reason. Diffusing the composition of each row at every step prevents the ensemble from concentrating on the single network that generated the segment, and the analysis update opposes that diffusion without removing it. This raises the ensemble mean at the structural zeros, to \(0.015\) against \(0.010\) for the jitter, and adds step-to-step variation. After the change the composition forecast is more accurate. Its error \(e_{\mathrm{post}}\) is \(0.185\) against \(0.255\), a difference of \(-0.070\) with interval \([-0.097,-0.044]\), and its error in the change matrix is \(e_{\Delta A}=0.453\) against \(0.543\), a difference of \(-0.090\) with interval \([-0.131,-0.049]\). Figure~\ref{fig:tracking_networks} shows the corresponding windowed estimates averaged over replications. Rows four to seven of Table~\ref{tab:tracking} show both ensembles close to nominal before the change. After it the jitter ensemble contracts onto the network it estimated before the change, its spread-to-skill ratio falling to \(0.633\) and its inclusion rate to \(0.840\), while the composition ensemble stays at \(1.299\) and \(0.954\) and alone admits further update through \eqref{eq:regression}. Panel (a) of Figure~\ref{fig:tracking} shows the instantaneous error \(e_k\) through the change, which ends lower for the jitter, at \(0.248\) against \(0.345\). The composition trajectory fluctuates more from step to step but is centred closer to the truth, which is what the windowed errors measure.

The difference between the two configurations lies at the edges that the change activates. Panels (b) and (d) of Figure~\ref{fig:tracking} show the two entries that the change reduces to near zero, and both configurations follow their decline. Panels (c) and (e) show the two entries that the change activates from zero. At the end of the record the composition estimates are \(0.190\) and \(0.177\) against true values of \(0.191\) and \(0.185\), whereas the jitter estimates are \(0.106\) and \(0.105\). At the change point the composition forecast also maintains substantially wider member intervals at these two entries. These wider intervals are the forecast variation supplied by the composition forecast of Section~\ref{sec:adaptability}, which by the bound \eqref{eq:collapse} allows the regression update to move the two entries away from zero. The jitter perturbs each entry in proportion to its current value and therefore restores almost no forecast variance at an entry near zero.

The two configurations have nearly identical one-step predictive diagnostics (rows eight to eleven of Table~\ref{tab:tracking}) since the change redistributes influence while preserving row totals and a one-step count forecast is determined mainly by the baseline and the persistence term of \eqref{eq:model}. Predictive calibration alone therefore cannot validate recovery of the network, a restriction that applies to the Chicago application of Section~\ref{sec:chicago}, where no reference network exists.

\begin{table}[t]
\centering
\caption{Tracking experiment. Entries are means over replications with standard deviations in parentheses.}
\label{tab:tracking}
\footnotesize
\setlength{\tabcolsep}{5pt}
\begin{tabular}{lrrr}
\toprule
 & composition forecast & gamma jitter & difference \([95\%\ \mathrm{CI}]\) \\
\midrule
\(e_{\mathrm{pre}}\)                          & 0.231 (0.040) & 0.154 (0.030) & \(+0.077\ [+0.060,+0.093]\) \\
\(e_{\mathrm{post}}\)                         & 0.185 (0.030) & 0.255 (0.052) & \(-0.070\ [-0.097,-0.044]\) \\
\(e_{\Delta A}\) \eqref{eq:change_error}      & 0.453 (0.068) & 0.543 (0.073) & \(-0.090\ [-0.131,-0.049]\) \\
inclusion rate, pre                           & 0.929 (0.031) & 0.920 (0.050) & \(+0.009\ [-0.011,+0.029]\) \\
inclusion rate, post                          & 0.954 (0.017) & 0.840 (0.034) & \(+0.115\ [+0.100,+0.130]\) \\
\(\mathrm{SS}\) \eqref{eq:spread_skill}, pre  & 1.255 (0.110) & 1.179 (0.221) & \(+0.075\ [-0.030,+0.181]\) \\
\(\mathrm{SS}\), post                         & 1.299 (0.085) & 0.633 (0.146) & \(+0.666\ [+0.579,+0.752]\) \\
log score \eqref{eq:mixture_log_score}        & \(-2.189\) (0.018) & \(-2.185\) (0.018) & \(-0.004\ [-0.005,-0.003]\) \\
predictive coverage                           & 0.951 (0.002) & 0.953 (0.002) & \(-0.002\ [-0.003,-0.002]\) \\
PIT mean                                      & 0.488 (0.003) & 0.497 (0.004) & \(-0.008\ [-0.010,-0.007]\) \\
PIT variance ratio                            & 1.033 (0.011) & 1.013 (0.012) & \(+0.021\ [+0.018,+0.023]\) \\
\(\widehat r\), post                          & 0.574 (0.021) & 0.520 (0.022) & \(+0.054\ [+0.048,+0.059]\) \\
\bottomrule
\end{tabular}
\end{table}

\begin{figure}[!tbp]
    \centering
    \IfFileExists{tracking2_networks.pdf}{%
        \includegraphics[width=\textwidth]{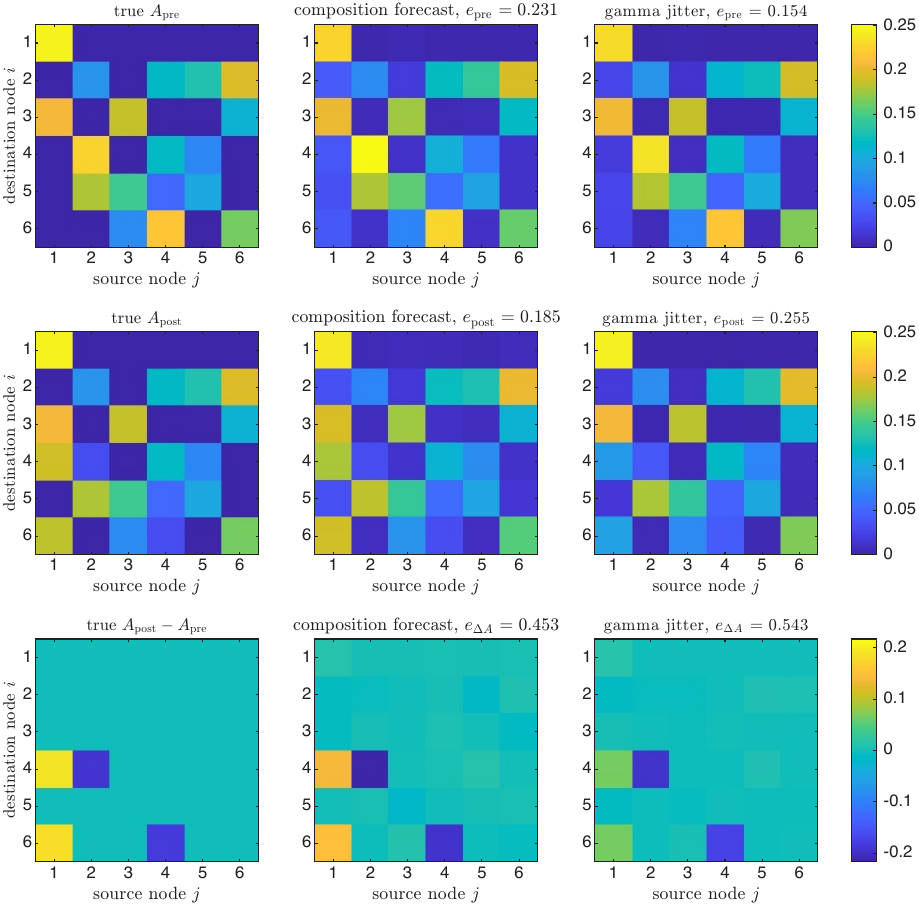}%
    }{%
        \fbox{\begin{minipage}[c][0.55\textheight][c]{0.92\textwidth}
        \centering Figure file \texttt{tracking2\_networks.pdf} not found.
        \end{minipage}}%
    }
    \caption{Tracking experiment, windowed network estimates averaged over replications. Columns give the truth, the composition forecast and the gamma jitter. Rows give the pre-change network, the post-change network and the change matrix. The six network panels share one colour scale and the three change panels a symmetric one. The errors in the titles are the replication means of Table~\ref{tab:tracking} rather than the error of the averaged matrix displayed, which is lower.}
    \label{fig:tracking_networks}
\end{figure}

\begin{figure}[!tbp]
    \centering
    \IfFileExists{tracking2.pdf}{%
        \includegraphics[width=\textwidth]{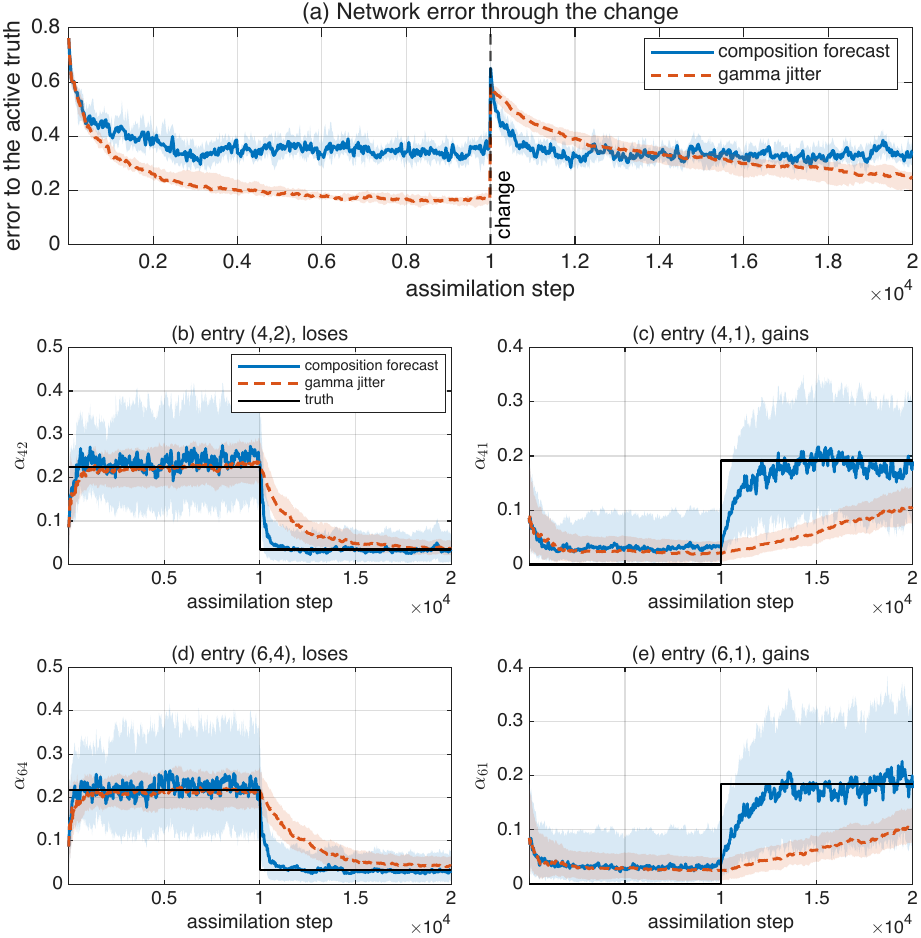}%
    }{%
        \fbox{\begin{minipage}[c][0.60\textheight][c]{0.92\textwidth}
        \centering Figure file \texttt{tracking2.pdf} not found.
        \end{minipage}}%
    }
    \caption{Tracking experiment with twenty paired replications. Panel (a) shows the relative Frobenius error \(e_k\) of the ensemble-mean network to the currently active truth, the median over replications with an interquartile band, and a vertical rule at the unannounced change. Panels (b)--(e) show the four modified entries of \(A\). Each gives the piecewise-constant truth, the median over replications of the ensemble mean, and the median over replications of the within-run \(5\%\)--\(95\%\) member interval. Entries \(\alpha_{41}\) and \(\alpha_{61}\) are exactly zero before the change, and the jitter configuration estimates them only slowly because its perturbation is proportional to the coefficient it perturbs.}
    \label{fig:tracking}
\end{figure}

\clearpage

\section{Application to daily property-crime counts in Chicago}\label{sec:chicago}

This section applies the NB filter to daily crime counts from Chicago. Chicago crime records have been used before to study excitation between the community areas of the city, for gang-related homicides \citep{linderman2014discovering} and for burglaries \citep{mark2019network}. Here the application concerns daily counts of property crime in sixteen contiguous community areas over 2021--2024. Property crime, and burglary in particular, is known to show near-repeat victimisation, in which an offence raises the short-term risk of further offences nearby. This clustering has motivated self-exciting point-process models of crime \citep{mohler2011crime,reinhart2018review}. Daily counts of this kind are also overdispersed and seasonal, which makes them a natural test of the negative-binomial observation model in a setting where no reference network is available.

\subsection{Data and node set}\label{sec:chicago_data}

Incident-level crime records for the City of Chicago are published with a timestamp, a primary offence type and the community area in which the incident occurred \citep{chicagocrimes2026}. We take the four offence types that form the property index of the Uniform Crime Reports, theft, burglary, motor vehicle theft and arson, and aggregate them to daily counts by community area, so \(\Delta t=1\) day and the discrete-time recursion \eqref{eq:model} is the model rather than an approximation to a continuous-time process. The filter is run on 1 January 2021 to 31 December 2024, which gives \(n=1461\) assimilation steps. Records from 2015 and 2016 are used only to fit the climatology of Section~\ref{sec:chicago_clim}. Area-days with no incident are entered as zero and make up 4.3\% of the panel.

The node set is the \(m=16\) community areas shown in Figure~\ref{fig:chicago_map}, a West Side block (areas 22--31) and the lakefront--downtown strip (areas 3, 6, 7, 8, 32 and 33). Their boundaries are taken from the city's open data portal \citep{chicagoboundaries2026}. They form one contiguous study region, which reduces the potential influence of omitted intermediate areas, and they carry counts high enough for daily modelling. Over the window they record 143{,}159 incidents \citep{chicagocrimes2026}. Mean daily counts, shown in Figure~\ref{fig:chicago_map}(a), range from 2.1 in West Garfield Park (26) to 12.9 in the Near North Side (8), and the overall mean is 6.1.

\begin{figure}[t]
    \centering
    \IfFileExists{map_pair_property.pdf}{%
        \includegraphics[width=\textwidth]{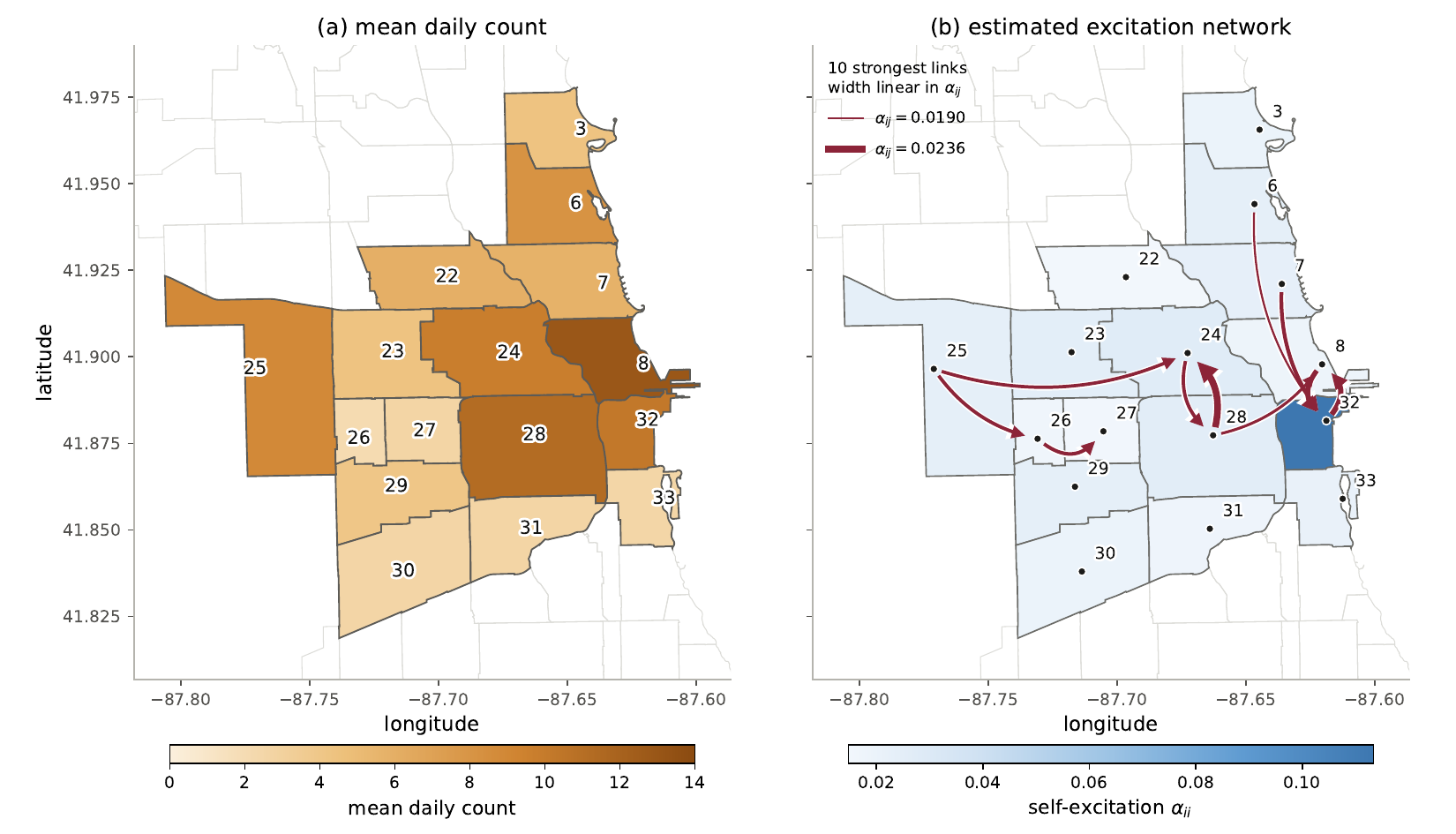}%
    }{%
        \fbox{\begin{minipage}[c][0.3\textheight][c]{0.92\textwidth}
        \centering Figure file \texttt{map\_pair\_property.pdf} not found.
        \end{minipage}}%
    }
    \caption{The sixteen modelled community areas, 1 January 2021 to 31 December 2024, labelled by community-area number and referred to by that number throughout. The rest of the city is outlined for context. (a) Mean daily property-crime count. (b) The excitation network recovered by the filter. Each area is filled by its estimated self-excitation \(\widehat\alpha_{ii}\), and the ten largest off-diagonal entries are drawn as arrows from source \(j\) to destination \(i\) with width linear in \(\widehat\alpha_{ij}\). Seven of the ten join areas that share a boundary. The Loop (32) has a self-excitation of 0.113, about four times that of any other area, and the largest cross-excitation runs from the Near West Side (28) to West Town (24).}
    \label{fig:chicago_map}
\end{figure}

\subsection{Offline ``climatology'' and the anomaly recursion}\label{sec:chicago_clim}

The baseline \(\mu_i\) of \eqref{eq:model} is a constant, whereas the counts carry an annual cycle with a peak in mid-August about 52\% above the trough in mid-February, and a weekly cycle that differs between areas. With a constant baseline the filter can follow the seasonal level only through the excitation term because \(\sum_j\alpha_{ij}Y_{k,j}\) is the only input that varies with the calendar, and since every area rises and falls together the attribution enters every row of \(A\) in the same direction. We therefore estimate the deterministic component offline, calling it the ``climatology'' of the panel, and run the filter on the anomaly.

The climatology is a Poisson log-linear regression,
\begin{equation}
\log\E[Y_{k,i}] = c_i + d_{i,\mathrm{dow}(k)} + \sum_{h=1}^{3}\Big[a_h\sin\frac{2\pi h\,t_k}{365.2425} + b_h\cos\frac{2\pi h\,t_k}{365.2425}\Big],
\label{eq:climatology}
\end{equation}
with \(t_k\) the day count from a fixed origin, six weekday effects \(d_{i,\cdot}\) per node, and annual harmonics \(a_h,b_h\) pooled across nodes. The annual cycle of property crime has one shape citywide and pooling estimates it from sixteen series rather than one, whereas the weekday cycle differs between areas, its maximum-to-minimum ratio being 1.22 at the median area and 1.51 in the Loop (32). The regression is fitted on 1 January 2015 to 31 December 2016 and frozen, so the whole of the analysis window is out of sample with respect to every seasonal parameter. The seasonal multiplier
\begin{equation}
\begin{aligned}
s_{k,i}&=\frac{g_{k,i}}{|\mathcal K_0|^{-1}\sum_{\ell\in\mathcal K_0}g_{\ell,i}},\\
g_{k,i}&=\exp\Big(d_{i,\mathrm{dow}(k)}+\sum_{h=1}^{3}\Big[a_h\sin\frac{2\pi h\,t_k}{365.2425}+b_h\cos\frac{2\pi h\,t_k}{365.2425}\Big]\Big),
\end{aligned}
\label{eq:seasonal_multiplier}
\end{equation}
where \(\mathcal K_0\) is the set of days in the training window and \(g_{k,i}\) is the seasonal part of \eqref{eq:climatology}, has arithmetic mean one over the training window for every node, so \(\mu_i\) keeps its meaning as the average baseline level, and slow drift in the level remains for the filter to track through \(\mu_i\). Figure~\ref{fig:chicago_clim} shows the frozen multiplier over the counts it is applied to for the Near North Side (8). The filter is then run with \(\mu_i s_{k,i}\) in place of \(\mu_i\),
\begin{equation}
\lambda_{k+1,i}=\mu_i s_{k+1,i}+(1-\beta_i\Delta t)\big(\lambda_{k,i}-\mu_i s_{k,i}\big)+\sum_{j=1}^m\alpha_{ij}Y_{k,j},
\label{eq:anomaly_recursion}
\end{equation}
which is \eqref{eq:model} applied to the anomaly \(\lambda_{k,i}-\mu_i s_{k,i}\). The counts are not transformed, the observation model \eqref{eq:nb} and the update of Section~\ref{sec:analysis} are unchanged, and \(s\equiv1\) recovers the constant-baseline filter exactly.

\begin{figure}[t]
    \centering
    \IfFileExists{climatology_fit_property.pdf}{%
        \includegraphics[width=\textwidth]{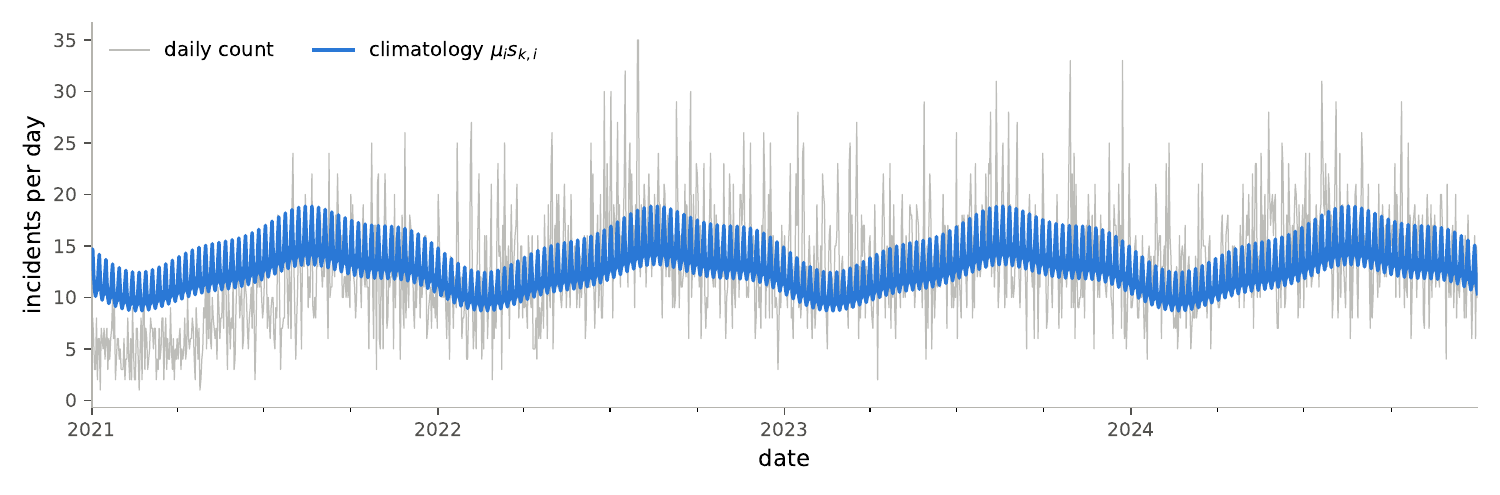}%
    }{%
        \fbox{\begin{minipage}[c][0.2\textheight][c]{0.92\textwidth}
        \centering Figure file \texttt{climatology\_fit\_property.pdf} not found.
        \end{minipage}}%
    }
    \caption{Daily property-crime counts in the Near North Side (8) over the analysis window. The climatology \(\mu_i s_{k,i}\) of \eqref{eq:climatology}--\eqref{eq:seasonal_multiplier} is superimposed. The seasonal shape was fitted on 2015--2016 and frozen, so the curve is extrapolated from calendar position alone. Here \(\mu_i\) is fixed at the mean of \(Y/s\) over the window, whereas in the filter it is a tracked state. The annual component and the weekly cycle are both visible.}
    \label{fig:chicago_clim}
\end{figure}

The climatology removes the component of dependence between consecutive days that is explained by their sharing a season or a part of the week, which is a function of the calendar and not of the previous day's counts. It cannot remove a burst that falls on particular days in one year and different days in the next because the fit is a function of the day of the year and the day of the week alone. It has 118 parameters against 11{,}696 training observations. A burst of that kind is represented as excitation in \eqref{eq:model}, and the residual lag-one autocorrelation over the analysis window is the clustering that remains for the filter.

Three quantities computed from the residuals of the seasonal regression over the analysis window, none of which is supplied to the filter, serve below as external checks. The lag-one autocorrelation of the residuals has median 0.23 across areas, reaching 0.65 in the Loop (32). Their lag-one cross-correlation between adjacent areas is 1.44 times that between non-adjacent areas, and their mean lag-one autocorrelation is 2.11 times their mean lag-one cross-correlation between distinct areas.

\subsection{Initialisation and online estimation}\label{sec:chicago_init}

The runs use the implementation of Section~\ref{sec:design} with the composition forecast disabled and \(\Ne=4000\).

The baseline is centred node by node from the first 365 days of the record after the seasonal multiplier is divided out. Writing \(\bar m_i\) for the mean of \(Y_{k,i}/s_{k,i}\) over those days, \(\bar m\) for its average over nodes, \(a_0\) for the initial row total and \(\beta_0=0.55\) for the initial decay, the stationary relation implied by \eqref{eq:count_mean} under a uniform row gives the centre
\begin{equation}
\mu_i^{0}=\max\Big\{\bar m_i-\frac{a_0\bar m}{\beta_0},\ 0.25\,\bar m_i\Big\}.
\label{eq:chicago_mu0}
\end{equation}
With \(a_0=0.20\) the floor is active for West Garfield Park (26), East Garfield Park (27), South Lawndale (30), Lower West Side (31) and Near South Side (33), so the nominal stationary-mean matching is truncated at those five areas to prevent a near-zero initial baseline. The matching is nominal. The \(\mu\) ensemble is log-normal about this centre with spread 0.10, kept small because a diffuse \(\mu\)-ensemble contributes a variance to \(M^f\) that dominates the covariance between \(\alpha_{ij}\) and \(M^f\) on which \eqref{eq:regression} depends. Each \(\alpha_{ij}\) is log-normal with median \(a_0/m\) and log-standard deviation 0.50. We take \(a_0=0.20\), for which the ensemble-mean row total is \(0.227\) rather than \(a_0\) and excitation carries about 23\% of the conditional mean at initialisation. The decay \(\beta_i\) is log-normal with centre 0.55 and log-standard deviation 0.25. Each member then passes a stability screen, which shrinks its row of \(A\) until its row sum of \(G\), \((1-\beta_i\Delta t)+\Delta t\sum_j\alpha_{ij}\), lies below \(0.97\) and thereby enforces the row-sum bound \eqref{eq:first_order}.

The dispersion is estimated online by the forecast-spread-adjusted estimator \eqref{eq:r_estimator} with one change from the twin experiments. There the sums run over every node and every step in the window and return one \(r\) for the network. Here they run over steps only and return one value per node,
\begin{equation}
\hat r_i=\frac{\sum_{k}\big[(\bar M^{f}_{k,i})^{2}+V^{f}_{k,i}\big]}{\sum_{k}\big[(Y_{k,i}-\bar M^{f}_{k,i})^{2}-\bar M^{f}_{k,i}-V^{f}_{k,i}\big]},
\label{eq:r_node}
\end{equation}
which the observation model \eqref{eq:nb} accommodates without change. The pooled sum is dominated by the largest squared residuals, which belong to the most overdispersed areas, and the single \(r\) it returns leaves the predictive distribution too wide in the less-dispersed areas, as Section~\ref{sec:chicago_results} quantifies. The window is 200 days rather than the 500 of Section~\ref{sec:design} to allow faster adaptation of the dispersion within the record, and the log-space update rate is raised correspondingly to 0.12. Each \(\hat r_i\) starts at 25, is updated every five steps from 5\% of the record, and is clamped to \([0.5,500]\). The uniform variates that randomise the probability integral transform below come from a stream separate from the filter's. The remaining constants are as in Section~\ref{sec:design}.

\subsection{Evaluation without a reference network}\label{sec:chicago_eval}

The evaluation windows are as follows. The first 365 days, calendar year 2021, initialise \(\mu\) and enter no statistic. Predictive diagnostics are computed before each assimilation from 2022 onward and are reported by calendar year, and the summary statistics of Table~\ref{tab:chicago}, together with the average that defines \(\widehat A\), are taken over the second half of the record, 2023--2024. Predictive calibration is measured first by the 5\%--95\% coverage of the predictive distribution of Section~\ref{sec:design}, the fraction of observations \(Y_{k,i}\) that fall within the central 90\% interval of its distribution function \(F_{k,i}(y)=\Ne^{-1}\sum_s F_{\mathrm{NB}}\big(y;M^{f,(s)}_{k,i},\hat r_i\big)\), with \(F_{\mathrm{NB}}\) the negative-binomial distribution function, so the interval is that of the mixture and not of the member means. Coverage above nominal is expected for a count interval with integer endpoints, so it is supplemented by the randomised PIT. With \(F\) the mixture predictive distribution function and \(u\) uniform on \((0,1)\), \(\mathrm{PIT}_k=F(Y_k-1)+u\,[F(Y_k)-F(Y_k-1)]\) is uniform under calibration. Its variance relative to the uniform value \(1/12\) separates discreteness from over-dispersion, which coverage alone cannot, a ratio below one meaning the predictive distribution is too wide, and it is computed for the panel and for each area. Its mean by calendar month tests for seasonal bias.

The recovered network is assessed against information the filter does not receive. The share of the ten largest off-diagonal entries of \(\widehat A\) that join areas sharing a boundary is compared with the 25\% of ordered pairs that are adjacent, and the recovered self-excitation \(\widehat\alpha_{ii}\) is correlated across nodes with the lag-one autocorrelation of the seasonal-regression residuals. Neither measures error. Each is a condition a spatially meaningful excitation network would satisfy.

\subsection{Results}\label{sec:chicago_results}

Table~\ref{tab:chicago} reports the filter of Section~\ref{sec:chicago_init} beside two runs that differ from it in one setting each, the constant baseline \(s\equiv1\) in place of the climatology and a single pooled \(r\) in place of \eqref{eq:r_node}.

\begin{table}[t]
\centering
\caption{The negative-binomial ensemble filter on daily property-crime counts in sixteen Chicago community areas, 1 January 2021 to 31 December 2024, \(n=1461\), \(\Ne=4000\). The first column is the configuration of Section~\ref{sec:chicago_init} and the other two change one setting each. Statistics are averaged over the second half of the record, which is also the window over which \(\widehat A\) is averaged, and \(\widehat A_1\) and \(\widehat A_2\) are the averages over its first and second halves. The share of off-diagonal excitation in the ten largest entries would be \(10/240=0.042\) under a uniform matrix, and 25\% of ordered pairs of areas are adjacent.}
\label{tab:chicago}
\footnotesize
\setlength{\tabcolsep}{4pt}
\begin{tabular}{lccc}
\toprule
 & climatology, node \(r\) & \(s\equiv1\) & pooled \(r\) \\
\midrule
dispersion \(\hat r\), median [range]                        & 40 [13,\,205] & 36 [13,\,209] & 19 \\
\(\spr(B+\Delta t\,\widehat A)\)                              & 0.897 & 0.926 & 0.887 \\
mean row total \(\sum_j\widehat\alpha_{ij}\) (initial 0.227)  & 0.215 & 0.215 & 0.218 \\
\midrule
PIT mean (0.5 under calibration)                             & 0.496 & 0.493 & 0.500 \\
PIT variance ratio (1 under calibration)                     & 0.953 & 0.953 & 0.915 \\
\quad by year, 2022 / 2023 / 2024                            & .94 / .94 / .97 & .95 / .94 / .97 & .79 / .88 / .95 \\
areas with PIT variance ratio in \([0.9,1.1]\)               & 14/16 & 14/16 & 8/16 \\
5\%--95\% coverage (nominal 0.90)                            & 0.943 & 0.944 & 0.952 \\
corr(monthly count, monthly PIT mean)                        & \(-0.55\) & \(+0.76\) & \(-0.61\) \\
\midrule
share of off-diagonal excitation in ten largest entries      & 0.070 & 0.068 & 0.064 \\
adjacent share of ten largest entries                        & 7/10 & 7/10 & 7/10 \\
adjacent / non-adjacent mean \(\widehat\alpha_{ij}\)         & 1.24 & 1.20 & 1.20 \\
corr\((\widehat\alpha_{ij},\ \text{centroid distance})\)      & \(-0.33\) & \(-0.28\) & \(-0.34\) \\
diagonal / off-diagonal of \(\widehat A\)                    & 2.17 & 2.04 & 2.11 \\
corr\((\widehat\alpha_{ii},\ \text{residual acf}_1)\)        & 0.87 & 0.86 & 0.86 \\
\midrule
tail-median step increment (\(\times10^{-3}\))               & 4.2 & 4.4 & 3.8 \\
split-half \(\|\widehat A_1-\widehat A_2\|_F/\|\widehat A\|_F\) & 0.126 & 0.118 & 0.114 \\
\bottomrule
\end{tabular}
\end{table}

The calibration rows of Table~\ref{tab:chicago} compare the predictive distribution with the values it would give under correct specification. The PIT mean is 0.496 against 0.5 and the PIT variance ratio is 0.953 against 1, where a value below one means the predictive distribution is too wide. Coverage is 0.943 against the nominal 0.90, the excess reflecting the discreteness of the counts as in Section~\ref{sec:static_results}. The variance ratio lies within 10\% of one in fourteen of the sixteen areas, the two exceptions both being too wide, and between 0.94 and 0.97 in each evaluation year. The predictive distribution is therefore calibrated in mean and in width, overall and marginally by area and by year.

The pooled-\(r\) column shows what the node-specific estimator \eqref{eq:r_node} contributes. The estimates \(\hat r_i\) range from 13 to 205, which is a Fano factor \(1+\bar M_i/\hat r_i\) from 1.01 to 1.62, and their ordering is unrelated to the level of the counts. Pooling returns \(\hat r=19\) for every area, a value set by the most overdispersed areas, whose squared residuals dominate the pooled sum. The predictive distribution is then too wide in the less-dispersed areas. The variance ratio falls to 0.915, and to 0.79 in 2022, and only eight of the sixteen areas are within 10\% of calibration. The recovered network changes little. The two estimates of \(A\) differ by 0.109 in relative Frobenius norm, which is less than the split-half difference within one run, and they share seven of their ten largest entries.

The constant-baseline column shows what the climatology contributes. Without it the monthly mean PIT correlates with the monthly mean count at \(+0.76\), and the forecast is biased downward in summer and upward in winter. With it the correlation is \(-0.55\), smaller in magnitude and opposite in sign, since the frozen 2015--2016 amplitude overshoots the 2021--2024 cycle. A month-dependent bias therefore remains in both columns, and the calibration statements above are marginal. The width of the predictive distribution and the network rows are almost the same in both columns. The annual cycle that is absorbed by the excitation term under a constant baseline thus appears mainly as a seasonal bias of the predictive distribution and only weakly in the recovered network.

The network rows summarise the recovered network shown in Figure~\ref{fig:chicago_map}(b). The mean row total ends at 0.215 against an initial 0.227. Two further runs with \(a_0=0.05\) and \(a_0=0.40\), the reference configuration otherwise unchanged, begin at mean row totals of 0.057 and 0.427 and end at 0.058 and 0.327. The final row totals retain a strong dependence on their initial values across this sevenfold range, which indicates that the total excitation entering an area is weakly informed by these data. The counts inform the distribution of that total within a row more strongly. The diagonal carries 13\% of the total excitation on 6\% of the entries, and the largest self-excitation, in the Loop (32), is eight times its initial value. The off-diagonal entries are much more uniform. The ten largest cross-excitations carry 7.0\% of the off-diagonal excitation against 4.2\% under uniformity, and Figure~\ref{fig:chicago_map}(b) displays the largest entries of a nearly uniform matrix rather than a sparse network.

The ranking is nonetheless geographic although no geographic information is supplied to the filter. Seven of the ten largest cross-excitations in Figure~\ref{fig:chicago_map}(b) join areas that share a boundary, against the 25\% of ordered pairs that do, and the mean entry correlates with inter-centroid distance at \(-0.33\). The recovered structure also agrees with the residuals of the seasonal regression, which are not supplied to the filter either. The self-excitation correlates with the residual lag-one autocorrelation across areas at 0.87, and the diagonal-to-off-diagonal ratio of 2.17 is close to the 2.11 of the residuals, whereas the adjacent-to-non-adjacent ratio of 1.24 is below their 1.44. The filter therefore recovers the diagonal contrast in full and the geographic gradient in part. The Loop (32) has the strongest residual clustering and the largest recovered self-excitation, and it is also the one area whose predictive distribution remains clearly too wide. Its largest residuals fall on consecutive days at the end of July, when Grant Park within the area hosts a multi-day festival. The festival recurs every year but is too sharp to be represented by three annual harmonics, which makes the Loop a clear case in which unresolved calendar structure may be absorbed by the estimated excitation and dispersion. Whether such a burst belongs to the excitation term or to the background is a modelling choice rather than a discovery.

The last two rows of Table~\ref{tab:chicago} indicate that the estimates fluctuate around a stable level over the averaging window. The relative step-to-step increment of the ensemble-mean network falls from \(6.8\times10^{-3}\) over the first hundred steps to \(4.2\times10^{-3}\), which it holds for the second half of the record, and averaging the two halves of the window separately changes \(\widehat A\) by 0.126 in relative Frobenius norm.

Taken together, the rows describe a division in how strongly the counts inform the two aspects of \(\widehat A\). The level of excitation is weakly informed and stays close to its initial value, whereas the composition within a row is more strongly informed and reproducible between the two halves of the record, the diagonal more so than the off-diagonal, and where it is reproducible it is local. This is the division assumed by the composition forecast of Section~\ref{sec:adaptability}, and it motivates estimating the row total independently, for example by a batch method, while using the sequential filter to estimate its composition.
\section{Conclusion}\label{sec:conclusion}

We have introduced a negative-binomial ensemble filter for sequential reconstruction of the excitation network of a discrete-time self-exciting count process from overdispersed counts. The method combines a conjugate scalar update on the scale \(u=M/(r+M)\), a rank-preserving posterior sampler, an ensemble-regression transform from the intensity increment to the Hawkes parameters, a row-composition forecast that keeps the filter responsive to network changes, and online estimation of the dispersion, common to the network in the twin experiments and node-specific in the application. A concentration floor enforces the finite-variance condition \(b_f+r>2\) while preserving the transformed-scale mean, and the moment-stability conditions provide exact tests for the simulated count process.

Twin experiments illustrate static reconstruction of a six-node network under strong overdispersion, where twenty paired replications against the ensemble Poisson--gamma filter give comparable network accuracy and a predictive log score higher by \(3.34\ [3.22,3.46]\) and an interval inclusion rate of \(0.90\) against \(0.45\), and tracking of an abrupt, row-sum-preserving redistribution of influence, where twenty further paired replications that differ only in the parameter forecast show that the composition forecast reduces the post-change network error from \(0.255\) to \(0.185\) and recovers two edges that are exactly zero before the change, while the jitter leaves both at little more than half their true size. On daily property-crime counts in sixteen Chicago community areas over 2021--2024, with a climatology of annual harmonics and weekday effects estimated offline on 2015--2016 and the filter run on the anomaly, the filter gives one-step-ahead predictive distributions with a probability integral transform of mean 0.496 and variance 0.95 of the uniform value, calibrated marginally by area and by year with a month-dependent bias remaining, and a network whose ten largest cross-excitations join adjacent areas in seven cases against a chance rate of one in four although no geographic information is supplied to the filter. The same panel shows that a single dispersion parameter, set by the most overdispersed areas, leaves the predictive distribution too wide in the less-dispersed areas, and that estimating it node by node restores calibration while changing the recovered network by no more than the split-half variation within a single run. It also shows that on seasonal counts a constant baseline leaves a month-dependent bias in the one-step predictive distribution, which the frozen climatology reverses in sign and reduces, and that on such data the level of excitation is weakly informed by the counts, a sevenfold change in the initial row total being still visible at the end of the record, while its composition is informed more strongly.

The principal limitations are that the comparison with the ensemble Poisson--gamma filter covers the static case only and that the tracking experiment compares two parameter forecasts under a single observation model without an offline comparator such as a rolling-window likelihood built from \(\ell_{\mathrm{NB}}\). The factorised observation model does not represent contemporaneous dependence generated by shared latent drivers. A persistent common driver may consequently be absorbed into the estimated off-diagonal excitation. Extending the benchmark comparison to a changing truth, adding a static-versus-dynamic comparison on a constant truth, relaxing the cross-node independence assumption, and testing the node-specific dispersion estimator against a simulated truth with known heterogeneous dispersion are the immediate extensions.

\appendix
\numberwithin{equation}{section}
\numberwithin{figure}{section}

\section{Proof of Proposition~\ref{prop:stability}}\label{app:stability_proof}

We work with the count-mean recursion \eqref{eq:count_mean}, write \(\rho_i=1-\beta_i\Delta t\) for the diagonal entries of \(B\) so that \(0\leq\rho_i<1\) under \eqref{eq:domain}, and \(G=B+\mathcal A\), and use only the two conditional moments in \eqref{eq:nb}. The initial count mean \(\bm M_0\) is assumed nonnegative, so that \(\bm M_k\geq0\) for every \(k\) under \eqref{eq:domain}. The statements concern moment recursions and do not establish existence or uniqueness of a stationary law.

\subsection*{Part (i): mean}
Taking conditional expectations in \(\bm M_{k+1}=\bm\omega+B\bm M_k+\mathcal A\bm Y_k\) and using \(\E[\bm Y_k\mid\bm M_k]=\bm M_k\) gives \(\bm m_{k+1}=\bm\omega+G\bm m_k\) for \(\bm m_k=\E[\bm M_k]\). Hence \(\bm m_k\) converges from any finite initialisation if and only if \(\spr(G)<1\), with limit
\begin{equation}
    \bar{\bm m}=(I-G)^{-1}\bm\omega .
\label{eq:app_mean_limit}
\end{equation}
Note that the condition \(\spr(G)<1\) is necessary and this conclusion does not require \(G\) to be irreducible. To see this, let \(\bm v\geq0\), \(\bm v\neq0\) be a left Perron eigenvector of the nonnegative matrix \(G\), so \(\bm v^\top G=\spr(G)\bm v^\top\). Then \(\bm v^\top\bm\omega>0\) because \(\bm\omega>0\) entrywise, and pairing the recursion with \(\bm v\) gives \(\bm v^\top\bm m_{k+1}=\spr(G)\bm v^\top\bm m_k+\bm v^\top\bm\omega\), which diverges when \(\spr(G)\geq1\). A stationary solution with finite mean would need \((1-\spr(G))\bm v^\top\bar{\bm m}=\bm v^\top\bm\omega>0\), which is impossible when \(\spr(G)\geq1\). Since \(G\geq0\) entrywise, \(\spr(G)\leq\|G\|_\infty=\max_i(\rho_i+\sum_j\mathcal A_{ij})\), which is the row-sum bound \eqref{eq:first_order}. Hence, the row-sum bound is sufficient, but not necessary.

\subsection*{Part (ii): covariance}

\begin{proof}[Proof of Proposition~\ref{prop:stability}(ii)]
Let \(\Sigma_k=\Cov(\bm M_k)\). Conditional on \(\bm M_k\) the components of \(\bm Y_k\) are independent and \(\Cov(\bm Y_k\mid\bm M_k)=\diag(M_{k,i}+M_{k,i}^2/r)\), so \(\bm M_{k+1}\) has conditional mean \(\bm\omega+G\bm M_k\) and conditional covariance \(\mathcal A\diag(M_{k,i}+M_{k,i}^2/r)\mathcal A^\top\). The law of total covariance then gives
\begin{equation}
    \Sigma_{k+1}
    =
    G\Sigma_kG^\top
    +
    \mathcal A\,\diag\!\Big(m_{k,i}+\frac{\Sigma_{k,ii}+m_{k,i}^2}{r}\Big)\mathcal A^\top ,
    \label{eq:app_Sigma_recursion}
\end{equation}
which we write as \(\Sigma_{k+1}=\mathcal L(\Sigma_k)+C_k\) with
\begin{equation}
\begin{aligned}
    \mathcal L(X)&=GXG^\top+\tfrac1r\,\mathcal A\diag(X_{11},\ldots,X_{mm})\mathcal A^\top,\\
    C_k&=\mathcal A\,\diag\!\big(m_{k,i}+m_{k,i}^2/r\big)\mathcal A^\top .
\end{aligned}
\label{eq:app_L_C}
\end{equation}
The operator \(\mathcal L\) is linear in \(X\), and \(C_k\) depends only on the mean. Writing \(\vecop(X)\) for the columns of \(X\) stacked into a vector of length \(m^2\), \(\vecop(GXG^\top)=(G\otimes G)\vecop(X)\) and \(\vecop(\diag(X_{11},\ldots,X_{mm}))=EE^\top\vecop(X)\) with \(E\) as in Proposition~\ref{prop:stability}, so
\begin{equation}
    \vecop\!\big(\mathcal L(X)\big)=L\,\vecop(X),
    \qquad
    L=G\otimes G+\frac1r\,(\mathcal A\otimes\mathcal A)EE^\top ,
    \label{eq:app_L_operator}
\end{equation}
which is the matrix in \eqref{eq:second_order_exact}. Hence \(\spr(L)\) is the spectral radius of \(\mathcal L\).

We first prove sufficiency. Suppose that \(\spr(L)<1\). Since \(L\geq G\otimes G\) entrywise, \(\spr(L)\geq\spr(G)^2\), so \(\spr(G)<1\) and \(\bm m_k\) converges by part (i). Hence \(C_k\) converges, and the affine recursion with contracting linear part converges from any finite start.

We now prove necessity and divergence when \(\spr(L)\geq1\). The argument uses the uncentred second moment \(Q_k=\E[\bm M_k\bm M_k^\top]=\Sigma_k+\bm m_k\bm m_k^\top\), which is entrywise nonnegative since \(\bm M_k\geq0\). Two cases arise. Either \(\spr(G)<1\), so that \(\bm m_k\) is bounded by part (i) and divergence of \(Q_k\) implies divergence of \(\Sigma_k\), or \(\spr(G)\geq1\), so that \(Q_k\) can diverge through the mean alone and \(\Sigma_k\) must be treated directly. Adding \(\bm m_{k+1}\bm m_{k+1}^\top=(\bm\omega+G\bm m_k)(\bm\omega+G\bm m_k)^\top\) to \eqref{eq:app_Sigma_recursion} and using \(\Sigma_{k,ii}+m_{k,i}^2=Q_{k,ii}\) gives
\begin{equation}
    Q_{k+1}=\mathcal L(Q_k)+H_k,
    \qquad
    H_k=\bm\omega\bm\omega^\top+\bm\omega\bm m_k^\top G^\top+G\bm m_k\bm\omega^\top+\mathcal A\diag(\bm m_k)\mathcal A^\top ,
\label{eq:app_Q_recursion}
\end{equation}
with the same operator \(\mathcal L\) as in \eqref{eq:app_L_C}. Every term of \(H_k\) is entrywise nonnegative, and \(H_k\geq\bm\omega\bm\omega^\top>0\) entrywise since \(\bm\omega>0\). With \(\bm q_k=\vecop(Q_k)\) and \(\bm h_k=\vecop(H_k)\), \eqref{eq:app_L_operator} turns \eqref{eq:app_Q_recursion} into \(\bm q_{k+1}=L\bm q_k+\bm h_k\), in which \(L\), \(\bm q_k\) and \(\bm h_k\) are all entrywise nonnegative. By the Perron--Frobenius theorem for nonnegative matrices, \(L\) has a left eigenvector \(\bm w\geq0\), \(\bm w\neq0\), satisfying \(\bm w^\top L=\spr(L)\bm w^\top\), and no irreducibility is needed.

Consider first the case \(\spr(G)<1\). Put \(z_k=\bm w^\top\bm q_k\geq0\). Then
\begin{equation}
    z_{k+1}=\spr(L)\,z_k+\bm w^\top\bm h_k\;\geq\;z_k+c,
    \qquad
    c=\bm w^\top\vecop(\bm\omega\bm\omega^\top)>0 ,
\label{eq:app_z_w}
\end{equation}
so \(z_k\to\infty\), and since \(z_k\leq\|\bm w\|_2\|Q_k\|_F\) this gives \(\|Q_k\|_F\to\infty\). Since \(\spr(G)<1\), \(\bm m_k\) is bounded by part (i), and \(\|\Sigma_k\|_F\geq\|Q_k\|_F-\|\bm m_k\bm m_k^\top\|_F\) gives \(\|\Sigma_k\|_F\to\infty\). Since \(\bm\omega>0\), the forcing has a positive projection onto every nonzero nonnegative left Perron vector of \(L\). Consequently, the Perron component of \(Q_k\) receives a positive input at every step and diverges when \(\spr(L)\geq1\).

It remains to treat \(\spr(G)\geq1\), where the mean recursion is unstable. Let \(\bm v\geq0\), \(\bm v\neq0\), be a left Perron eigenvector of \(G\). Then \(\mathcal A^\top\bm v\neq0\). Otherwise \(\bm v^\top G=\bm v^\top B\), so \(v_i\rho_i=\spr(G)v_i\) for every \(i\), which forces \(v_i=0\) for all \(i\) because \(\rho_i<1\leq\spr(G)\). Put \(z_k=\bm v^\top\Sigma_k\bm v\geq0\). Pairing \eqref{eq:app_Sigma_recursion} with \(\bm v\) on both sides,
\begin{equation}
\begin{aligned}
    z_{k+1}
    &=\spr(G)^2z_k+\sum_{i}(\mathcal A^\top\bm v)_i^2\Big(m_{k,i}+\frac{\Sigma_{k,ii}+m_{k,i}^2}{r}\Big)\\
    &\geq\spr(G)^2z_k+\sum_{i}(\mathcal A^\top\bm v)_i^2\,\omega_i
    \;=:\;\spr(G)^2z_k+c ,
\end{aligned}
\label{eq:app_z_perron}
\end{equation}
using \(\Sigma_{k,ii}\geq0\) and \(m_{k,i}\geq\omega_i>0\) for \(k\geq1\), which follows from \(\bm m_{k+1}=\bm\omega+G\bm m_k\geq\bm\omega\) since \(\bm m_k\geq0\). Since \(c>0\) and \(\spr(G)^2\geq1\), \(z_k\to\infty\), hence \(\|\Sigma_k\|_F\to\infty\).

For the stationary claim, a strictly stationary solution with finite covariance has finite mean, so by part (i) it satisfies \(\spr(G)<1\). Its uncentred second moment \(Q=\Sigma+\bar{\bm m}\bar{\bm m}^\top\) is constant, and \eqref{eq:app_Q_recursion} gives \(\bm q=L\bm q+\bm h\) with \(\bm h\geq\vecop(\bm\omega\bm\omega^\top)\). Pairing with \(\bm w\), \((1-\spr(L))\,\bm w^\top\bm q=\bm w^\top\bm h>0\) with \(\bm w^\top\bm q\geq0\), which forces \(\spr(L)<1\).

Finally, for the row-wise test write \(g_i=\sum_aG_{ia}\) and \(s_i=\sum_l\mathcal A_{il}\), so that \eqref{eq:second_order_row} is
\begin{equation}
    \max_{i}\left\{g_i^{2}+\frac{s_i^{2}}{r}\right\}<1 .
    \label{eq:app_second_order_sufficient}
\end{equation}
Since \(L\geq0\), \(\spr(L)\leq\|L\|_\infty\). The row of \(L\) indexed by \((i,j)\) sums to
\begin{equation}
    \sum_{a,b}G_{ia}G_{jb}+\frac1r\sum_{l}\mathcal A_{il}\mathcal A_{jl}
    \;\leq\;
    g_ig_j+\frac{s_is_j}{r},
\label{eq:app_row_sum}
\end{equation}
using \(\sum_l\mathcal A_{il}\mathcal A_{jl}\leq(\sum_l\mathcal A_{il}^2)^{1/2}(\sum_l\mathcal A_{jl}^2)^{1/2}\leq s_is_j\) for nonnegative entries. Applying the Cauchy--Schwarz inequality in \(\R^2\) to the vectors \((g_i,s_i/\sqrt r)\) and \((g_j,s_j/\sqrt r)\),
\begin{equation}
    g_ig_j+\frac{s_is_j}{r}
    \;\leq\;
    \Big(g_i^2+\tfrac{s_i^2}{r}\Big)^{1/2}\Big(g_j^2+\tfrac{s_j^2}{r}\Big)^{1/2}
    \;\leq\;
    \max_{i}\Big\{g_i^2+\tfrac{s_i^2}{r}\Big\},
\label{eq:app_cs}
\end{equation}
so \eqref{eq:app_second_order_sufficient} implies \(\spr(L)<1\).
\end{proof}

Since \(\max_i\{g_i^2+s_i^2/r\}\leq\max_ig_i^2+r^{-1}\max_is_i^2=\|G\|_\infty^2+r^{-1}\|\mathcal A\|_\infty^2\), the left-hand side of \eqref{eq:app_second_order_sufficient} is an upper bound on \(\spr(L)\) that is never larger than the bound obtained by maximising the two terms separately, and the two bounds are equal when the same row attains both maxima. Condition \eqref{eq:app_second_order_sufficient} is therefore no more restrictive than \(\|G\|_\infty^2+r^{-1}\|\mathcal A\|_\infty^2<1\), and it still implies \(\spr(L)<1\). Unlike that bound, it is a per-row test.
In the univariate case \(L\) is the scalar \(\gamma^2+a^2/r\), so \eqref{eq:app_second_order_sufficient} coincides with \(\spr(L)<1\) and is exact. For \(m>1\) it is conservative, and \eqref{eq:app_L_operator} should be used when a sharp answer is required. Both conditions tighten monotonically as \(r\) decreases and both reduce to the corresponding Poisson conditions as \(r\to\infty\). If the dispersion parameter is node-specific, \(\tfrac1r EE^\top\) in \eqref{eq:app_L_operator} is replaced by \(ER^{-1}E^\top\) with \(R=\diag(r_1,\ldots,r_m)\), and the Cauchy--Schwarz argument above, applied to \(\sum_l\mathcal A_{il}\mathcal A_{jl}/r_l\) without the relaxation to \(s_is_j\), gives the sufficient condition \(\max_i\{g_i^2+\sum_l\mathcal A_{il}^2/r_l\}<1\) in place of \eqref{eq:app_second_order_sufficient}.

\bibliographystyle{elsarticle-harv}
\bibliography{NBHawkes_refs}

\end{document}